\DeclareMathOperator{\tr}{Tr}
\newtheorem{theorem}{Theorem}
\newtheorem{lemma}{Lemma}
\newtheorem{remark}{Remark}
\begin{document}

\title{Note on a Family of Monotone Quantum Relative Entropies} 

\author{Andreas~Deuchert, Christian~Hainzl, Robert~Seiringer}

\noaffiliation

\begin{abstract}
Given a convex function $\varphi$ and two hermitian matrices $A$ and $B$, Lewin and Sabin study in \cite{Mathieu_Julien} the relative entropy defined by $\mathcal{H}(A,B)=\tr \left[ \varphi(A) - \varphi(B) - \varphi'(B)(A-B) \right]$. Amongst other things, they prove that the so-defined quantity is monotone if and only if $\varphi'$ is operator monotone. The monotonicity is then used to properly define $\mathcal{H}(A,B)$ for bounded self-adjoint operators acting on an infinite-dimensional Hilbert space by a limiting procedure. More precisely, for an increasing sequence of finite-dimensional projections $\left\lbrace P_n \right\rbrace_{n=1}^{\infty}$ with $P_n \to 1$ strongly, the limit $\lim_{n \to \infty} \mathcal{H}(P_n A P_n, P_n B P_n)$ is shown to exist and to be independent of the sequence of projections $\left\lbrace P_n \right\rbrace_{n=1}^{\infty}$. The question whether this sequence converges to its "obvious" limit, namely $\tr \left[ \varphi(A)- \varphi(B) - \varphi'(B)(A-B) \right]$, has been left open. We answer this question in principle affirmatively and show that $\lim_{n \to \infty} \mathcal{H}(P_n A P_n, P_n B P_n) = \tr\left[ \varphi(A) - \varphi(B) - \frac{\text{d}}{\text{d} \alpha} \varphi\left( \alpha A + (1-\alpha)B \right)\vert_{\alpha = 0} \right]$. If the operators $A$ and $B$ are regular enough, that is $(A-B)$, $\varphi(A)-\varphi(B)$ and $\varphi'(B)(A-B)$ are trace-class, the identity $\tr\left[ \varphi(A) - \varphi(B) - \frac{\text{d}}{\text{d} \alpha} \varphi\left( \alpha A + (1-\alpha)B \right)\vert_{\alpha = 0} \right] = \tr \left[ \varphi(A)- \varphi(B) - \varphi'(B)(A-B) \right]$ holds. 
\\
\\
{\bf Mathematics Subject Classification (2010).} 81Q99, 46N50, 47A99. \\
{\bf Keywords.} relative entropy, operator monotonicity.
\end{abstract}

\maketitle

\section{Introduction and Main Results}
\label{sec:1}
We start with a quick review of the setting and the results of \cite{Mathieu_Julien} that are of interest for us. Let $\varphi \in C^0\left( [0,1], \mathbb{R} \right)$ be a continuous, convex function such that $\varphi'$ is continously differentiable on $(0,1)$ and let $A$ and $B$ be two hermitian matrices with $0 \leq A,B \leq 1$. Lewin and Sabin define a family of relative entropies of $A$ with respect to $B$ by the formula
\begin{equation}
\mathcal{H}(A,B) = \tr \left[ \varphi(A) - \varphi(B) - \varphi'(B)(A-B) \right].
\label{eq:1}
\end{equation}
As long as $0 < B < 1$, the above expression is well defined. If $0$ and/or $1$ are contained in the spectrum of $B$ and if $\varphi$ is not differentiable at these points this is still true if $A=B$ on $\text{Ker}(B)$, $\text{Ker}(1-B)$ or $\text{Ker}(B) \oplus \text{Ker}(1-B)$, respectively (the trace is taken on the complement of these subspaces). Are the just mentioned conditions not fulfilled, they define $\mathcal{H}(A,B) = \infty$.

In \cite[Theorem 1]{Mathieu_Julien}, the authors show that the so-defined relative entropy is monotone if and only if $\varphi'$ is operator monotone. We quote:
\begin{theorem}(Monotonicity).
Under the above conditions, the following are equivalent
\begin{enumerate}
\item $\varphi'$ is operator monotone on $(0,1)$;
\item For any linear map $X:h_1 \to h_2$ on finite-dimensional spaces $h_1$ and $h_2$ with $X^* X \leq 1$, and for any $0 \leq A,B \leq 1$ on $h_1$, we have
\begin{equation}
\mathcal{H}(X A X^* , X B X^*) \leq \mathcal{H}(A,B),
\label{eq:2}
\end{equation}
\end{enumerate}
with $\mathcal{H}(A,B)$ defined in Eq.~\eqref{eq:1}.
\label{thm:1}
\end{theorem}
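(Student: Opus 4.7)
The plan is to establish both directions separately, the substantive content being the implication $(1)\Rightarrow(2)$.

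For $(1)\Rightarrow(2)$, I would first reduce a general contraction $X:h_1\to h_2$ with $X^*X\leq 1$ to a compression by an orthogonal projection by means of a Halmos-type dilation: on $h_1\oplus h_2$ the block matrix with entries $\sqrt{1-X^*X}$, $X^*$ in the first row and $-X$, $\sqrt{1-XX^*}$ in the second is unitary, and $XAX^{*}=PU(A\oplus0)U^{*}P$ where $P$ is the projection onto $h_{2}$. Since $\mathcal{H}$ is invariant under joint unitary conjugation, the problem reduces to showing $\mathcal{H}(PAP,PBP)\leq\mathcal{H}(A,B)$ for an orthogonal projection~$P$.

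Next I would invoke the Nevanlinna--L\"owner integral representation of the operator monotone function $\varphi'$,
\begin{equation*}
\varphi'(x)=\alpha+\beta x+\int_0^{\infty}\Bigl[\tfrac{1}{\lambda}-\tfrac{1}{x+\lambda}\Bigr]\dd\mu(\lambda),
\end{equation*}
with $\beta\geq0$ and $\mu$ a positive measure on $(0,\infty)$. Integrating once gives a representation of $\varphi$ as a superposition of the elementary operator convex functions $x\mapsto -\log(x+\lambda)$ plus an affine and a quadratic part. Since $\mathcal{H}$ is linear in $\varphi$, the affine piece contributes identically $0$, and the quadratic piece $x^{2}$ is handled by the elementary inequality $\tr[(PAP)^{2}]\leq\tr[PA^{2}P]$, the whole claim reduces to showing the compression inequality for each elementary relative entropy
\begin{equation*}
\mathcal{H}_\lambda(A,B)=\tr\bigl[-\log(A+\lambda)+\log(B+\lambda)+(B+\lambda)^{-1}(A-B)\bigr].
\end{equation*}
This I would attempt via the operator Jensen inequality of Hansen--Pedersen, applied to the operator convex resolvent $x\mapsto(x+\lambda)^{-1}$, or equivalently by expressing $-\log(x+\lambda)$ itself as a further integral of resolvents and invoking the resolvent identity together with a Klein-type trace inequality.

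For $(2)\Rightarrow(1)$, I would linearise: fix $0<B<1$ and a self-adjoint $K$ with $B+tK\in[0,1]$ for small $t$, and apply the hypothesis to $A=B+tK$, $B$ together with a one-parameter family of contractions approaching the identity. A Taylor expansion in $t$, combined with the Daleckii--Krein formula for $\varphi''$, reduces the resulting inequality to a quadratic form whose compression-monotonicity is classically equivalent, letting $B$ and $K$ vary, to operator monotonicity of $\varphi'$ on $(0,1)$.

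The main obstacle is the compression inequality for the single family $\mathcal{H}_\lambda$. Although the integral representation step is a clean reduction, this remaining inequality is a genuine operator inequality of Lieb--Klein type and not purely formal: it is where operator monotonicity of $\varphi'$ (rather than mere scalar convexity) is actually used, and it is where I would expect to spend most of the technical effort, most likely by combining the operator Jensen inequality with a careful manipulation of double operator integrals built from resolvents of $A$ and $B$.
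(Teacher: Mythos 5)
First, a point of order: this paper does not prove Theorem~\ref{thm:1}. It is quoted verbatim from Lewin and Sabin \cite{Mathieu_Julien} (note the ``We quote:'' immediately preceding it) and is only used later as a black box, in the monotone-convergence step around Eq.~\eqref{eq:40}. So there is no proof in the paper to compare yours against; I can only assess your proposal against the known argument of \cite{Mathieu_Julien}, whose skeleton (dilation of a contraction to a unitary, L\"owner integral representation of $\varphi'$, reduction to elementary logarithmic entropies, and a second-order expansion for the converse) your plan essentially reproduces.

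Within that skeleton there are two genuine problems. The first is that you invoke the Nevanlinna--L\"owner representation for functions operator monotone on $(0,\infty)$, with a measure on $(0,\infty)$ and a linear term $\beta x$. Here $\varphi'$ is only operator monotone on $(0,1)$ and in the cases of interest does not extend (e.g.\ $\varphi'(x)=\ln\frac{x}{1-x}$ for the Fermi entropy blows up at $x=1$), so your representation simply does not apply; you must use the representation adapted to the interval, namely Eq.~\eqref{eq:10}, with a probability measure on $[-1,1]$ and elementary pieces $-\ln(1+\lambda(1-2x))$ that may be singular at the endpoints $x=0,1$ of the admissible spectrum. This is not cosmetic: the endpoint singularities are the reason for the kernel conventions in the definition of $\mathcal{H}$, and your dilation step, which manufactures the eigenvalue $0$ for $A\oplus 0$ and $B\oplus 0$, has to be checked against exactly those conventions. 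The second and more serious problem is that the entire substance of $(1)\Rightarrow(2)$ is the compression inequality for the single elementary entropy $\mathcal{H}_\lambda$, and your proposal stops precisely there. Citing the Hansen--Pedersen operator Jensen inequality is a direction, not a proof: $f(X^*CX)\le X^*f(C)X$ controls $\tr[\varphi(XAX^*)]$ and $\tr[\varphi(XBX^*)]$ in \emph{opposite} directions, and the cross term $-\tr\bigl[\varphi'(XBX^*)(XAX^*-XBX^*)\bigr]$ is not monotone on its own; the three terms must be combined through an explicit computation, e.g.\ a resolvent double-integral identity of the type in Eq.~\eqref{eq:7b} whose integrand is manifestly positive and behaves well under $C\mapsto XCX^*$. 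The converse direction has the analogous issue: the claim that compression-monotonicity of the Daleckii--Krein quadratic form is ``classically equivalent'' to operator monotonicity of $\varphi'$ is exactly the statement one must prove. As written, the proposal identifies the right architecture but leaves both load-bearing steps unproven.
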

In a second step, this result is used to extend the definition of the relative entropy to self-adjoint operators acting on an infinite-dimensional separable Hilbert space $h$ via the formula 
\begin{equation}
\mathcal{H}(A,B) := \lim_{n \to \infty} \mathcal{H}(P_n A P_n, P_n B P_n),
\label{eq:3}
\end{equation}
where $\left\lbrace P_n \right\rbrace_{n=1}^{\infty}$ is an increasing sequence of finite-dimensional projections with $P_n \to 1$ in the strong operator topology. By $\mathcal{L}(h)$ we denote the set of bounded linear operators on $h$ and $h_1, h_2$ denote infinite-dimensional separable Hilbert spaces. We quote again:
\begin{theorem}(Generalized relative entropy in infinite dimension).
We assume that $\varphi \in C^0\left( [0,1], \mathbb{R} \right)$ and that $\varphi'$ is operator monotone on $(0,1)$.
\begin{enumerate}
\item ($\mathcal{H}$ is well defined). For an increasing sequence $P_n$ of finite-dimensional projections on $h$ such that $P_n \to 1$ strongly, the sequence $\mathcal{H}(P_n A P_n, P_n B P_n)$ is monotone and possesses a limit in $\mathbb{R}^+ \cup \left\lbrace +\infty \right\rbrace$. This limit does not depend on the chosen sequence $P_n$ and hence $\mathcal{H}(A,B)$ is well-defined in $\mathbb{R}^+ \cup \left\lbrace +\infty \right\rbrace$.
\item (Approximation). If $X_n:h_1 \to h_2$ is a sequence such that $X_n ^* X_n \leq 1$ and $X_n^* X_n \to 1$ strongly in $h_1$, then
\begin{equation}
\mathcal{H}(A,B) = \lim_{n \to \infty} \mathcal{H}(X_n A X_n, X_n B X_n).
\label{eq:4} 
\end{equation}
\item (Weak lower semi-continuity). The relative entropy is weakly lower semi-continuous: if $0 \leq A_n, B_n \leq 1$ are two sequences such that $A_n \rightharpoonup A$ and $B_n \rightharpoonup B$ weakly-$*$ in $\mathcal{L}(h)$, then
\begin{equation}
\mathcal{H}(A,B) \leq \liminf_{n \to \infty} \mathcal{H}(A_n, B_n).
\label{eq:5}
\end{equation}
\end{enumerate}
\label{thm:2}
\end{theorem}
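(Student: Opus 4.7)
My plan is to reinterpret the infinite-dimensional relative entropy as a supremum, namely $\mathcal{H}(A,B) = \sup_{Q} \mathcal{H}(Q A Q, Q B Q)$ with the supremum taken over all finite-rank orthogonal projections $Q$ on $h$. Once this identification is in place, each of the three parts of Theorem~\ref{thm:2} follows fairly cleanly from the finite-dimensional monotonicity result of Theorem~\ref{thm:1}.

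First I would establish monotonicity of the sequence $\mathcal{H}(P_n A P_n, P_n B P_n)$ in $n$. Regarding $P_n$ as a contraction from $P_{n+1} h$ to $P_n h$, Theorem~\ref{thm:1} applied to $P_{n+1} A P_{n+1}$ and $P_{n+1} B P_{n+1}$ on $P_{n+1}h$ yields $\mathcal{H}(P_n A P_n, P_n B P_n) \leq \mathcal{H}(P_{n+1} A P_{n+1}, P_{n+1} B P_{n+1})$. Combined with positivity of $\mathcal{H}$ (Klein's inequality, which follows from convexity of $\varphi$), this gives a limit $L \in \mathbb{R}^+ \cup \{+\infty\}$. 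To prove independence of the sequence $\{P_n\}$, I would show $L = \sup_Q \mathcal{H}(QAQ, QBQ)$. The bound $L \leq \sup_Q$ is immediate since each $P_n$ is a competitor. For the converse, given any finite-rank $Q$, take $X := Q P_n\vert_{P_n h}: P_n h \to Qh$, which satisfies $X^*X \leq 1$, and apply Theorem~\ref{thm:1} to obtain
$$\mathcal{H}(Q P_n A P_n Q,\; Q P_n B P_n Q) \leq \mathcal{H}(P_n A P_n, P_n B P_n) \leq L.$$
Since $Qh$ is finite-dimensional and $P_n \to 1$ strongly, the compressed operators converge in norm on $Qh$ to $QAQ$ and $QBQ$. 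Lower semicontinuity of the finite-dimensional $\mathcal{H}$ then delivers $\mathcal{H}(QAQ, QBQ) \leq L$, and taking the supremum over $Q$ finishes part (1).

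Parts (2) and (3) are then corollaries of the supremum representation. For part (2), the bound $\mathcal{H}(X_n A X_n^*, X_n B X_n^*) \leq \mathcal{H}(A,B)$ follows by combining Theorem~\ref{thm:1} with the supremum characterization applied to finite-rank projections contained in $\mathrm{ran}(X_n)$; the matching lower bound comes from fixing an arbitrary finite-rank $Q$ on $h_1$ and compressing through $X_n$, using that $X_n^* X_n \to 1$ strongly together with finite-dimensional continuity on $Qh$. For part (3), any finite-rank projection $P$ turns weak-$*$ convergence into norm convergence on the finite-dimensional space $Ph$, so finite-dimensional lower semicontinuity gives $\mathcal{H}(PAP, PBP) \leq \liminf_n \mathcal{H}(PA_nP, PB_nP) \leq \liminf_n \mathcal{H}(A_n, B_n)$, and supremum over $P$ yields the claim.

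The main technical obstacle is the lower semicontinuity of the finite-dimensional $\mathcal{H}$ when eigenvalues of the second argument migrate to the boundary $\{0,1\}$ and $\varphi'$ is singular there; this is precisely the regime in which the $+\infty$ convention is invoked, and one needs that this convention respects limits. I would handle it by approximating $\varphi$ from below by an increasing family of convex functions $\varphi_k$ with $\varphi_k'$ bounded on $[0,1]$ (for instance by replacing $\varphi$ outside shrinking neighborhoods of the singular endpoints by its largest affine minorant). The associated entropies $\mathcal{H}_k$ are jointly continuous in $(A,B)$ on compact subsets of hermitian matrices bounded by $1$, so each is lower semicontinuous; monotone convergence $\mathcal{H}_k \nearrow \mathcal{H}$ (guaranteed by $\varphi_k \nearrow \varphi$ and convexity) then transfers lower semicontinuity to $\mathcal{H}$ itself, closing the remaining gap.
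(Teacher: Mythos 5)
This theorem is not proved in the paper you were given: it is quoted verbatim from Lewin--Sabin \cite{Mathieu_Julien} (note the ``We quote again:'' preceding it), so there is no in-paper proof to compare against. Judged on its own merits, your proposal follows essentially the same architecture as the original proof in \cite{Mathieu_Julien}: monotonicity in $n$ via Theorem~\ref{thm:1} applied to the contraction $P_n\colon P_{n+1}h\to P_nh$, identification of the limit with $\sup_Q \mathcal{H}(QAQ,QBQ)$ over finite-rank projections, and reduction of parts (2) and (3) to finite-dimensional lower semicontinuity plus strong (resp.\ weak-$*$) convergence of finite-rank compressions. That reduction is sound, and you correctly identify that the entire weight of the argument rests on lower semicontinuity of the finite-dimensional $\mathcal{H}$ at the boundary $\{0,1\}$.

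The one genuine soft spot is your justification of $\mathcal{H}_k\nearrow\mathcal{H}$. The implication ``$\varphi_k\nearrow\varphi$ and convexity $\Rightarrow \mathcal{H}_k\le\mathcal{H}_{k+1}\le\mathcal{H}$'' is false as stated: $\mathcal{H}$ is a Bregman-type divergence and is not monotone under pointwise ordering of the potentials. What you actually need is that the \emph{differences} $\varphi_{k+1}-\varphi_k$ and $\varphi-\varphi_k$ are themselves convex, so that $\mathcal{H}_\varphi=\mathcal{H}_{\varphi_k}+\mathcal{H}_{\varphi-\varphi_k}\ge\mathcal{H}_{\varphi_k}$ by Klein's inequality applied to the convex remainder. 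Your tangent-line truncation near the endpoints does deliver this (the derivative of $\varphi-\varphi_k$ is nondecreasing), and one must also check $\mathcal{H}_{\varphi-\varphi_k}\to 0$ when $\spec(B)\subset(0,1)$ and $\mathcal{H}_k\to+\infty$ when $B$ has an eigenvalue at a singular endpoint with $A\ne B$ on the eigenspace, so that the $+\infty$ convention is recovered as a supremum of continuous functions. So the construction works, but for the wrong stated reason. For comparison, Lewin and Sabin instead invoke the L\"owner integral representation $\varphi'(x)=a+b\int_{-1}^{1}\frac{2x-1}{1-\lambda(2x-1)}\dd\mu(\lambda)$ to write $\mathcal{H}$ as a positive superposition of elementary entropies built from $-\ln$, proving lower semicontinuity for each $\lambda$ and concluding by Fatou; this is the same decomposition that the present paper exploits heavily in Section~\ref{sec:2} (Eqs.~\eqref{eq:10}--\eqref{eq:11} and Lemma~\ref{lem:1}), and it yields the monotone approximation canonically (truncate $\mu$ to $[-1+\epsilon,1-\epsilon]$) rather than by an ad hoc modification of $\varphi$.
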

As one would expect, $\mathcal{H}(A,B)$ can take finite values when $A$ and $B$ themselves are not compact, as the following upper bound shows \cite[Theorem~3]{Mathieu_Julien}:
\begin{equation}
\mathcal{H}(A,B) \leq C \tr \left(\frac{1}{B^2} + \frac{1}{(1-B)^2} \right) \left( A-B \right)^2.
\label{eq:6}
\end{equation} 
Note that the only dependence on $\varphi$ on the right hand side of Eq.~\eqref{eq:6} is in the constant $C$. The question whether their notion of relative entropy in infinite dimensions is related to $\tr \left[ \varphi(A) - \varphi(B) - \varphi'(B)(A-B) \right]$, which is a-priori well-defined when the operator under the trace is trace-class, has been left open by the authors.

We answer this question in principle affirmatively, where "in principle" stands for the fact that $\tr \left[ \varphi(A) - \varphi(B) - \varphi'(B)(A-B) \right]$ turns out not to be the correct limit, in general.
\begin{theorem}
Let $\varphi \in C^0([0,1],\mathbb{R})$ be such that $\varphi'$ is operator monotone on (0,1) and let $\lbrace P_n \rbrace_{n=1}^{\infty}$ be defined as in Theorem~\ref{thm:2}. Then
\begin{equation}
\lim_{n \to \infty} \mathcal{H}(P_n A P_n, P_n B P_n) = \tr \left[ \varphi(A) - \varphi(B) - \frac{\text{d}}{\text{d} \alpha} \varphi\left( \alpha A + (1-\alpha)B \right)\Big|_{\alpha=0} \right],
\label{eq:7}
\end{equation}
with the understanding that either both sides are finite and equal each other, or both sides are infinite.
\label{thm:3}
\end{theorem}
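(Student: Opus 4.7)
The plan rests on two observations: (i) a trivial finite-dimensional identity equating the two sides of~\eqref{eq:7} at each level of truncation, and (ii) a Loewner-type positive integral representation that makes the $n\to\infty$ limit accessible by monotone convergence. By the Daleckii--Krein formula, $\frac{d}{d\alpha}\varphi(B+\alpha X)|_{\alpha=0} = \sum_{i,j}\varphi^{[1]}(\lambda_i,\lambda_j)P_iXP_j$ for the spectral data $\{\lambda_i,P_i\}$ of $B$, and only the diagonal ($i=j$) terms survive the trace; consequently, with $A_n:=P_nAP_n$, $B_n:=P_nBP_n$,
\begin{equation*}
\mathcal{H}(A_n,B_n) = \tr\Bigl[\varphi(A_n)-\varphi(B_n)-\tfrac{d}{d\alpha}\varphi(\alpha A_n+(1-\alpha)B_n)\big|_{\alpha=0}\Bigr]
\end{equation*}
for every $n$. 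Thus both sides of~\eqref{eq:7} already agree at each finite level and only identifying their common limit remains.

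For the limit I would decompose $\varphi'$ via Loewner's theorem into a linear part plus an integral over $\lambda\in\mathbb{R}\setminus(0,1)$ of the elementary operator monotone functions $x\mapsto 1/(\lambda-x)$. Integrating, $\varphi$ is a quadratic polynomial plus $\int \varphi_\lambda\,d\nu(\lambda)$ modulo irrelevant affine terms, with $\varphi_\lambda(x) = -\log|\lambda-x|$; correspondingly $\mathcal{H} = \tfrac{b}{2}\tr[(A-B)^2] + \int \mathcal{H}_\lambda\,d\nu(\lambda)$, and each summand is monotone in $n$ by Theorem~\ref{thm:1} applied to the operator-monotone $\varphi_\lambda'$. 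Combining the Taylor remainder form of $\mathcal{H}_\lambda$ with the scalar representation $-\log y = \int_0^\infty(\tfrac{1}{s+y}-\tfrac{1}{s+1})\,ds$ then yields the manifestly positive formula
\begin{equation*}
\mathcal{H}_\lambda(A,B) = 2\int_0^1(1-\alpha)\int_0^\infty \tr\bigl[R_{s,\alpha}(A-B)R_{s,\alpha}(A-B)R_{s,\alpha}\bigr]\,ds\,d\alpha,
\end{equation*}
with $R_{s,\alpha} = (s+|\lambda - B - \alpha(A-B)|)^{-1}\ge 0$; the cyclic rewriting $\tr[RXRXR] = \|R^{1/2}XR\|_{\mathrm{HS}}^2$ for Hermitian $X$ exhibits every integrand as non-negative, and the same formula defines $\mathcal{H}_\lambda(A,B)\in[0,\infty]$ directly for the infinite-dimensional operators.

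To pass to the limit, strong convergence $A_n \to A$ and $B_n \to B$ (inherited from $P_n \to 1$ strongly) yields strong convergence of the resolvents $R_{s,\alpha}^{(n)} \to R_{s,\alpha}$, and lower semi-continuity of the Hilbert--Schmidt norm gives $\mathcal{H}_\lambda(A,B) \le \liminf_n \mathcal{H}_\lambda(A_n,B_n)$ after applying Fatou in each of the variables $s,\alpha$ and $\nu$. The reverse inequality $\mathcal{H}_\lambda(A_n,B_n) \le \mathcal{H}_\lambda(A,B)$ for each $n$ is obtained from Theorem~\ref{thm:1} applied to the individual truncations: in finite dimensions it gives $\mathcal{H}_\lambda(A_n, B_n) \le \mathcal{H}_\lambda(A_N, B_N)$ for every $n \le N$, and identification of the integral representation with the Lewin--Sabin limit (extracted by applying the same Fatou/monotonicity argument at smaller scales and using operator convexity of $\varphi_\lambda$, which follows from operator monotonicity of $\varphi_\lambda'$) closes the sandwich. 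The quadratic piece is controlled analogously by $\tr[(A-B)^2]\le\liminf\tr[(A_n-B_n)^2]$ (Fatou) and $\tr[(A_n-B_n)^2]=\tr[(A-B)P_n(A-B)P_n]\le\tr[(A-B)^2]$ (cyclicity and $P_n\le 1$). Summing over $\nu$ reassembles the right-hand side of~\eqref{eq:7}.

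The principal obstacle is the reverse inequality just described: strong operator convergence alone delivers only lower semi-continuity of traces, so extracting a matching upper bound requires a combination of manifest positivity (via the squared-Hilbert--Schmidt rewriting), the monotonicity of $n\mapsto\mathcal{H}_\lambda(A_n,B_n)$ guaranteed by Theorem~\ref{thm:1} applied to each elementary $\varphi_\lambda$, and a careful identification of the infinite-dimensional integral representation with the Lewin--Sabin construction of Theorem~\ref{thm:2}. Once this sandwich is in place, the nested integral exchanges (in $s$, $\alpha$ and $\nu$) are routine owing to the uniform non-negativity of every integrand.
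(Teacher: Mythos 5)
Your overall architecture (Loewner representation of $\varphi'$, reduction to elementary functions $\varphi_\lambda$, a manifestly positive resolvent formula for the Taylor remainder, monotonicity in $n$ from Theorem~\ref{thm:1}) is close in spirit to the paper's proof, but there is a genuine gap at the central step, which you flag as ``the principal obstacle'' without actually resolving it. Your sandwich requires the reverse inequality $\mathcal{H}_\lambda(P_nAP_n,P_nBP_n)\le\mathcal{H}_\lambda(A,B)$, where the right-hand side is your infinite-dimensional integral expression. Theorem~\ref{thm:1} cannot deliver this: it is a statement about compressions between \emph{finite-dimensional} spaces, so it yields only the monotonicity $\mathcal{H}_\lambda(A_n,B_n)\le\mathcal{H}_\lambda(A_N,B_N)$ for $n\le N$, i.e.\ that the sequence increases to some limit $L_\lambda$; combined with your Fatou argument this gives $\mathcal{H}_\lambda(A,B)\le L_\lambda$ and nothing in the other direction. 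Your proposed fix --- ``identification of the integral representation with the Lewin--Sabin limit'' --- is precisely the statement being proven (for each $\varphi_\lambda$), so as written the argument is circular. What must replace it, and what the paper supplies in Eqs.~\eqref{eq:32}--\eqref{eq:38}, is a tightness/upper-semicontinuity argument: writing $Q=\sum_\beta q_\beta|\psi_\beta\rangle\langle\psi_\beta|$ in the Hilbert--Schmidt case, one shows that the part of $\tr\left[R(B_n)Q_nR(A_n)Q_nR(B_n)\right]$ supported on $(1-P_m)h$ is small uniformly in $n$, so that strong resolvent convergence on the range of $P_m$ upgrades the Fatou lower bound to genuine convergence of the traces pointwise in $(\lambda,t)$; dominated convergence in $t$ and monotone convergence in $\lambda$ (this is where Theorem~\ref{thm:1} legitimately enters) then finish the proof. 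Without such an argument your proof does not close.

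Two secondary omissions. First, the right-hand side of Eq.~\eqref{eq:7} involves $\frac{\text{d}}{\text{d}\alpha}\varphi(\alpha A+(1-\alpha)B)\big|_{\alpha=0}$, which in infinite dimensions must be \emph{constructed} (as a quadratic form on a dense domain, then via Friedrichs extension) before it can be equated with any positive double-integral formula; this is the content of the paper's Lemmas~\ref{lem:2} and \ref{lem:3} (behaviour of the Loewner measure near $\pm1$, explicit directional derivative) and of Lemma~\ref{lem:1}. You simply posit the Taylor-remainder resolvent formula for the infinite-dimensional operators, which conceals exactly the regularity issues coming from singularities of $\varphi'$ at $0$ and $1$. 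Second, the degenerate cases are never treated: when $B$ has an eigenvalue at a point of discontinuity of $\varphi'$ with $(A-B)\neq0$ on the corresponding eigenspace, the derivative does not exist as a form and both sides of Eq.~\eqref{eq:7} must be shown to be $+\infty$; and when $A-B$ is not Hilbert--Schmidt your Fatou direction survives but the case needs explicit words (the paper uses the lower bound of \cite[Theorem~3]{Mathieu_Julien} together with Eq.~\eqref{eq:42}).
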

\begin{remark}
We define the differential in Eq.~\eqref{eq:7} by the formula $\frac{\text{d}}{\text{d} \alpha} \left( \psi, \varphi\left( \alpha A + (1-\alpha)B \right) \psi \right) \big|_{\alpha=0} = \lim_{\alpha \to 0} \alpha^{-1} \left(\psi, \left[ \varphi(\alpha A + (1-\alpha)B) - \varphi(B) \right] \psi \right)$. In case $\varphi'$ is continuous on $[0,1]$, this limit exists for all $\psi \in h$. If $\varphi'$ is not continuous on the whole interval, it has singularities at $0$ and/or $1$ (we remind that $\varphi'$ is monotone increasing and continuous on $(0,1)$ by assumption) and we have to distinguish between three cases. First, assume $B$ has no eigenvalues at the points of discontinuity of $\varphi'$. Then the above limit exists for all $\psi$ in a suitably chosen dense set $D \subset h$ (see Section~\ref{sec:2}, Lemma~\ref{lem:3} for more details). Second, if $B$ has an eigenvalue at a point of discontinuity of $\varphi'$, $\psi$ is the corresponding eigenvector to the just mentioned eigenvalue and $(A-B)\psi \neq 0$, then the above limit equals $-\infty$. Third, if $(A-B)\psi = 0$ in the just mentioned situation, the above limit is equal to zero.
\label{rem:1}
\end{remark}
\begin{remark}
The operator monotonicity of the function $\varphi'$ implies the operator convexity of its primitives which in turn implies that $\varphi(A) - \varphi(B) - \frac{\text{d}}{\text{d} \alpha} \varphi\left( \alpha A + (1-\alpha)B \right)\vert_{\alpha=0}$ is positive (see Section~\ref{sec:2} for more details). This property can now be used to define a notion of trace that is applicable on the right hand side of Eq.~\eqref{eq:7}. Assume for the moment that $B$ has no eigenvalues at points of discontinuity of $\varphi'$ with $(A-B) \neq 0$ on the corresponding eigenspaces. Then the symmetric operator $\varphi(A) - \varphi(B) - \frac{\text{d}}{\text{d} \alpha} \varphi\left( \alpha A + (1-\alpha)B \right)|_{\alpha=0}$ can be defined on the dense set $D$ mentioned in Remark~\ref{rem:1} and since it is positive it has a Friedrichs extension $\left( T, \mathcal{D}(T) \right)$. By restricting attention to bases $\left\lbrace e_{\beta} \right\rbrace_{\beta = 1}^{\infty}$ where all $e_{\beta}$ lie in the form domain of $T$, we can define the trace of the operator on the right hand side of Eq.~\eqref{eq:7} to be $\sum_{\beta = 1}^{\infty} \left( e_{\beta}, T e_{\beta} \right)$, see Section~\ref{sec:2} for more details. The so-defined trace equals the usual trace whenever $\varphi(A) - \varphi(B) - \frac{\text{d}}{\text{d} \alpha} \varphi\left( \alpha A + (1-\alpha)B \right)|_{\alpha=0}$ is trace class and $+\infty$ otherwise. Now if $B$ has an eigenvalue at a point of discontinuity of $\varphi'$ and $(A-B) \neq 0$ on the corresponding eigenspace Remark~\ref{rem:1} suggest to define the trace on the right hand side of Eq.~\eqref{eq:7} to be $+\infty$. This goes hand in hand with the definition of the relative entropy for hermitian matrices of Lewin and Sabin which has been explained in the beginning of the introduction.
\label{rem:2}
\end{remark}
\begin{remark}
The idea to define the relative entropy as a trace over a manifestly positive operator in order to make it well-defined on a larger set, has already been used in \cite{HainzlLewinSeiringer}. The formula in the just mentioned reference equals the trace over $\varphi(A) - \varphi(B) - \frac{\text{d}}{\text{d} \alpha} \varphi\left( \alpha A + (1-\alpha)B \right)\vert_{\alpha=0}$ with $\varphi(x) = x \ln(x) + (1-x)\ln(1-x)$ and resembles Eq.~\eqref{eq:7b}.
\label{rem:3}
\end{remark}
\begin{remark} Assuming matrices, the equality $\tr \left[ \varphi(A) - \varphi(B) - \varphi'(B)(A-B) \right] = \tr \left[ \varphi(A) - \varphi(B) - \frac{\text{d}}{\text{d} \alpha} \varphi\left( \alpha A + (1-\alpha)B \right)\vert_{\alpha=0} \right]$ holds as one can see with a direct computation that exploits the cyclicity of the trace (see \cite[Theorem V.3.3]{Bhatia1997} for a simple way to compute the derivative). In general however, this cannot be expected. 
\label{rem:4}
\end{remark}
A crucial ingredient of our proof of Theorem~\ref{thm:3} is the following Lemma which we state here because it is of interest in itself.
\begin{lemma}
Let $\varphi \in C\left([0,1],\mathbb{R}\right)$ be such that $\varphi'$ is operator monotone on $(0,1)$. Assume further that $B$ has no eigenvalues at points of discontinuity of $\varphi'$ with $(A-B) \neq 0$ on the corresponding eigenspaces. Then there exists a constant $b \geq 0$ and a unique Borel probability measure $\mu$ on $[-1,1]$ such that
\begin{align}
&\tr \left[ \varphi(A) - \varphi(B) - \frac{\text{d}}{\text{d}\alpha} \varphi(\alpha A + (1-\alpha)B) \Big|_{\alpha=0} \right] = \label{eq:7b} \\
&2 b \int_{-1}^{1} \int_{0}^{\infty} \tr \left[ \frac{1}{1+\lambda(1-2 B) + t} Q \frac{1}{1+\lambda(1-2 A) + t} Q \frac{1}{1+\lambda(1-2 B) + t} \right] \text{d}t \ \text{d}\mu(\lambda), \nonumber
\end{align}
where $Q=(A-B)$. (To be precise, $\mu$ is unique only if $b>0$.)
\label{lem:1}
\end{lemma}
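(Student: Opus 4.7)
The plan is to invoke L\"owner's integral representation for $\varphi'$, to reduce the claim to a family of identities parametrised by $\lambda\in[-1,1]$, and then to verify those identities by an elementary resolvent computation. Since $\varphi'$ is operator monotone on $(0,1)$, the function $y\mapsto \varphi'((y+1)/2)$ is operator monotone on $(-1,1)$, so L\"owner's theorem yields a constant $a\in\mathbb{R}$ and a finite positive Borel measure $\nu$ on $[-1,1]$ with
\[
\varphi'(x) = a + \int_{-1}^{1}\frac{2x-1}{1+\lambda(1-2x)}\,d\nu(\lambda).
\]
Setting $b:=\nu([-1,1])$ and (when $b>0$) $\mu:=\nu/b$ already matches the constants in the Lemma. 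Integrating in $x$ and using the scalar identity $\int_0^\infty[(c+s)^{-1}-(1+s)^{-1}]\,ds=-\log c$ leads to the decomposition $\varphi(x)=(\text{const})+(\text{linear in }x)+\int_{-1}^1\psi_\lambda(x)\,d\nu(\lambda)$ with
\[
\psi_\lambda(x)=\frac{1}{2\lambda^{2}}\int_0^\infty\Big[(1+\lambda(1-2x)+s)^{-1}-(1+s)^{-1}\Big]\,ds\ +\ (\text{linear in }x).
\]
The constant and $x$-linear parts cancel in the combination $\varphi(A)-\varphi(B)-\frac{\mathrm{d}}{\mathrm{d}\alpha}\varphi(\alpha A+(1-\alpha)B)\big|_{\alpha=0}$, so only the $\psi_\lambda$ pieces contribute.

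Abbreviating $R_X := (1+\lambda(1-2X)+s)^{-1}$, the relation $R_B^{-1}-R_A^{-1}=2\lambda Q$ combined with the resolvent identity written on either side gives
\[
R_A-R_B = 2\lambda R_A Q R_B = 2\lambda R_B Q R_A,
\]
which in particular forces the nontrivial operator identity $R_A Q R_B = R_B Q R_A$. Coupled with the Fr\'echet-derivative formula $DR_X[B](Q)=2\lambda R_B Q R_B$, a short calculation then yields
\[
\psi_\lambda(A)-\psi_\lambda(B)-D\psi_\lambda[B](Q) = 2\int_0^\infty R_A Q R_B Q R_B\,ds = 2\int_0^\infty R_B Q R_A Q R_B\,ds,
\]
the second equality using $R_A Q R_B = R_B Q R_A$. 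The integrand $R_B Q R_A Q R_B=(R_A^{1/2}Q R_B)^{*}(R_A^{1/2}Q R_B)$ is a positive operator. Integrating this one-$\lambda$ identity against $d\nu(\lambda)=b\,d\mu(\lambda)$, taking the trace, and interchanging orders of integration by Fubini--Tonelli (justified by positivity) produces exactly the right-hand side of \eqref{eq:7b}.

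The remaining work is technical and is where I expect the difficulty to concentrate. First, one has to match the sesquilinearly-defined derivative of Remark~\ref{rem:1}, understood through the Friedrichs extension of Remark~\ref{rem:2}, with the ordinary Fr\'echet derivative $D\varphi[B](Q)$ on the dense domain $D$ where the manipulations above live; under the stated hypothesis on $B$ this should be a density argument. Second, one has to analyse the behaviour at $\lambda=\pm 1$ of the L\"owner measure $\nu$, since atoms of $\nu$ at these endpoints encode precisely the singularities of $\varphi'$ at $0$ and $1$. Third, one has to verify the finite/infinite dichotomy in \eqref{eq:7b} on both sides. The endpoint analysis is the principal obstacle: one must argue that, under the stated assumption on $B$, no mass of $\nu$ at $\lambda=\pm 1$ is silently lost and that the positivity of the integrand on the right correctly witnesses the infinite case whenever $B$ has an eigenvalue at $0$ or $1$ with $Q$ nonzero on the corresponding eigenspace.
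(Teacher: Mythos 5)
Your core computation is correct and follows essentially the same route as the paper: L\"owner's representation of $\varphi'$ [the paper's Eq.~\eqref{eq:10}], integration to the logarithmic representation \eqref{eq:11}, the formula $\ln c=-\int_0^\infty\left[(c+s)^{-1}-(1+s)^{-1}\right]\dd s$, and resolvent identities leading to the manifestly positive integrand $R_BQR_AQR_B$, after which trace and integrals commute by positivity; this is exactly the chain \eqref{eq:29}--\eqref{eq:30}. Your observation that $R_AQR_B=R_BQR_A$ (both sides equal $(2\lambda)^{-1}(R_A-R_B)$) is a tidy repackaging of what the paper does by applying the resolvent identity twice.

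The gap is that the items you defer as ``remaining technical work'' are not an epilogue --- they are the substance of the paper's proof, and your expectation that matching the form derivative of Remark~\ref{rem:1} with the Fr\'echet derivative ``should be a density argument'' underestimates what is needed. First, in infinite dimensions the Fr\'echet derivative $D\varphi[B](Q)$ you compute with need not exist at all: when $\varphi'$ is discontinuous at $0$ and/or $1$ and the spectrum of $B$ reaches those points, the derivative exists only as a quadratic form on the dense set $D=\cup_{\epsilon>0}\mathds{1}(\epsilon<B<1-\epsilon)h$ and the limiting operator may be unbounded, which is why the paper must pass through the Friedrichs extension to give the left-hand side of \eqref{eq:7b} a meaning at all. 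Second, to establish the derivative formula on $D$ (the paper's Lemma~\ref{lem:3}) one must interchange $\lim_{\alpha\to0}$ with the integral over $\lambda$; the dominating functions actually available are $C(\epsilon)\left(1-\ln(1-|\lambda|)\right)$ for $\psi\in D$ [Eq.~\eqref{eq:24}], or $C(1-|\lambda|)^{-1}$ when $\varphi'$ is continuous at the endpoints [Eq.~\eqref{eq:25}], and their $\mu$-integrability is not automatic: it is precisely the paper's Lemma~\ref{lem:2}, extracted from the continuity of $\varphi$ (respectively $\varphi'$) at the endpoints via Fatou's lemma. Third, the hypothesis on $B$ --- no eigenvalues at discontinuities of $\varphi'$ with $Q\neq0$ on the corresponding eigenspaces --- enters exactly in this analysis: it makes $D$ dense and excludes the case where the limit defining the derivative is $-\infty$ [Eq.~\eqref{eq:17}]. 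An argument that never uses this hypothesis quantitatively cannot be complete, so until these three points are carried out with the stated estimates, your derivation establishes \eqref{eq:7b} only formally.
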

\begin{remark}
The formula on the right hand side of Eq.~\eqref{eq:7b} is in many circumstances easier to handle than the formula on the left hand side. This is because it is the integral (with a positive measure) of a positive function which is the trace of a bounded positive operator. In particular, the operator under the trace has a simpler form than the one on the left hand side of Eq.~\eqref{eq:7b}. 
\label{rem:4b}
\end{remark}
Assuming more regular operators $A$ and $B$, the equality mentioned in Remark~\ref{rem:4} is still true in infinite dimensions as the following statement shows.
\begin{theorem}
Let $\varphi \in C^0([0,1],\mathbb{R})$ be such that $\varphi'$ is operator monotone on (0,1). Assume in addition that $(A-B)$, $\varphi(A)-\varphi(B)$ and $\varphi'(B)(A-B)$ are trace-class. Then $ \frac{\text{d}}{\text{d} \alpha} \varphi\left( \alpha A + (1-\alpha)B \right)|_{\alpha=0}$ is trace-class and the identity
\begin{equation}
\tr \left[ \varphi(A) - \varphi(B) - \frac{\text{d}}{\text{d} \alpha} \varphi\left( \alpha A + (1-\alpha)B \right)\Big|_{\alpha=0} \right] = \tr \left[ \varphi(A)- \varphi(B) - \varphi'(B)(A-B) \right]
\label{eq:8}
\end{equation}
\label{thm:4}
holds.
\end{theorem}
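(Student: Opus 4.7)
The plan is to leverage the Löwner integral representation of $\varphi'$ that already underlies Lemma~\ref{lem:1}. By Löwner's theorem applied to $\varphi'$ on $(0,1)$, there exist $a \in \mathbb{R}$, $b \geq 0$, and a finite Borel measure $\mu$ on $[-1,1]$ such that $\varphi'(x) = a + 2bx + \int_{-1}^{1} k_\lambda(x)\, \text{d}\mu(\lambda)$, where the kernels $k_\lambda$ are the resolvent building blocks that give rise to Eq.~\eqref{eq:7b}. Integrating in $x$ gives a matching representation of $\varphi$, and the Daleckii--Krein formula (equivalently, a Duhamel-type formula for the Fréchet derivative of a function of a resolvent) yields, with $Q = A-B$,
\begin{equation*}
\frac{\text{d}}{\text{d}\alpha}\varphi(\alpha A + (1-\alpha)B)\Big|_{\alpha=0} = a Q + b(BQ + QB) + \int_{-1}^{1} DK_\lambda(B)[Q]\, \text{d}\mu(\lambda),
\end{equation*}
where $DK_\lambda(B)[Q]$ is the Fréchet derivative at $B$ in the direction $Q$ of an antiderivative of $k_\lambda$, represented via Duhamel's formula as an integral of resolvent-sandwich expressions in $B$ alone.

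The identity \eqref{eq:8} would then be checked block by block. The polynomial contribution $aQ + b(BQ + QB)$ has trace $\tr[(a + 2bB)Q]$ by cyclicity of the trace on the trace-class operator $Q$. For each $\lambda$, the elementary identity $\int_0^\infty (y+t)^{-2}\,\text{d}t = y^{-1}$, applied spectrally to the resolvent of $B$ and combined with cyclicity under the Duhamel $t$-integral, reduces $\tr[DK_\lambda(B)[Q]]$ to $\tr[k_\lambda(B)\, Q]$. Integrating against $\text{d}\mu(\lambda)$ and adding the polynomial part reassembles $\tr[\varphi'(B)(A-B)]$ on one side and $\tr[\frac{\text{d}}{\text{d}\alpha}\varphi(\alpha A+(1-\alpha)B)|_{\alpha=0}]$ on the other, provided the trace can be commuted with the $\mu$- and $t$-integrals.

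The main step, and the one that uses the extra hypotheses of Theorem~\ref{thm:4}, is to justify this interchange and simultaneously to deduce the trace-class property of $\frac{\text{d}}{\text{d}\alpha}\varphi(\alpha A + (1-\alpha)B)|_{\alpha=0}$. I would use the assumption that $\varphi'(B)(A-B)$ is trace-class as an integrability bound for $\lambda \mapsto \tr[k_\lambda(B)Q]$ against $\text{d}\mu(\lambda)$: since $\int k_\lambda(x)\,\text{d}\mu(\lambda) = \varphi'(x) - a - 2bx$ spectrally, absolute integrability of this map follows from $\|\varphi'(B)(A-B)\|_1 < \infty$ once the polynomial part has been subtracted off. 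Fubini then legitimates the interchange and yields Eq.~\eqref{eq:8}; trace-class membership of $\frac{\text{d}}{\text{d}\alpha}\varphi(\cdot)|_{\alpha=0}$ follows from its identification with the trace-class operator $\varphi(A)-\varphi(B)$ minus the now-integrated positive operator of Lemma~\ref{lem:1}, whose trace is finite by the identity just proved. The delicate point throughout is the possible divergence of $\varphi'$ at the endpoints $0$ and $1$, which causes $\mu$ to concentrate at $\lambda = \pm 1$; this is precisely why trace-class membership of $\varphi'(B)(A-B)$, rather than mere boundedness of $\varphi'(B)$, is the essential hypothesis that makes the integrability bound tight enough to close the argument.
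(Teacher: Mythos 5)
Your route is essentially the paper's: a L\"owner-type integral representation of $\varphi'$, the resolvent identity $\int_0^\infty (y+t)^{-2}\,\dda t = y^{-1}$ applied spectrally, cyclicity of the trace on the trace-class factor $Q$, and a Fubini argument keyed to the hypothesis that $\varphi'(B)(A-B)$ is trace-class. The block-by-block reduction of $\tr[DK_\lambda(B)[Q]]$ to $\tr[k_\lambda(B)Q]$ and the final identification of the derivative as a difference of trace-class operators also match the paper's Eqs.~\eqref{eq:47}--\eqref{eq:60}.

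There is, however, one step you assert that does not follow as stated, and it is exactly the point the paper isolates as Lemma~\ref{lem:4}. You claim that absolute $\mu$-integrability of $\lambda \mapsto \tr[k_\lambda(B)Q]$ ``follows from $\Vert\varphi'(B)(A-B)\Vert_1 < \infty$ once the polynomial part has been subtracted off.'' Knowing that the \emph{integrated} operator $\bigl(\int_{-1}^{1} k_\lambda(B)\,\dda\mu(\lambda)\bigr)Q$ is trace-class does not by itself control $\int_{-1}^{1}\bigl|\tr[k_\lambda(B)Q]\bigr|\,\dda\mu(\lambda)$, nor the stronger quantity $\sum_\beta\int_{-1}^{1}|q_\beta|\,\bigl(\psi_\beta,|k_\lambda(B)|\psi_\beta\bigr)\dda\mu(\lambda)$ that Fubini actually requires: in general there could be cancellations in $\lambda$ inside the operator integral. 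The argument closes only because of a special feature of the L\"owner kernel $k_\lambda(x)=\frac{2x-1}{1-\lambda(2x-1)}$: its denominator is bounded below by $1-|\lambda|>0$, so its sign as a function of the spectral parameter of $B$ is that of $2B-1$, independent of $\lambda$. Splitting $B$ at $1/2$ therefore gives $\bigl|\int_{-1}^{1} k_\lambda(B)\,\dda\mu(\lambda)\bigr| = \int_{-1}^{1} |k_\lambda(B)|\,\dda\mu(\lambda)$ as operators, and then the polar decomposition together with the two-sided-ideal property of the trace class turns the hypothesis $\Vert\varphi'(B)Q\Vert_1<\infty$ into the dominating function \eqref{eq:43}. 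You should make this sign observation explicit; without it the Fubini step, which is the only place the extra hypotheses of Theorem~\ref{thm:4} enter, is not justified. A secondary, more routine omission: for the reverse interchange (pulling the trace back out of the $t$- and $\lambda$-integrals) the paper splits $Q$ into $Q_\pm$ so as to work with positive integrands and Tonelli; some such device is needed in your write-up as well.
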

\begin{remark}
In mathematical physics one encounters applications where the state of a physical system is defined to be a minimizer of a nonlinear functional in which the physical relative entropy appears, see e.g. \cite{HainzlLewinSeiringer,GLderivation}. For a fermionic many-particle system the function $\varphi(x)=x \ln(x) + (1-x) \ln(1-x)$ is the right choice to define the physical relative entropy, while for bosons it is $\varphi(x)=x \ln(x) - (1+x) \ln(1+x)$. We note that both functions fulfill the requirements of Theorems~\ref{thm:2}-\ref{thm:4}. Since the right hand side of Eq.~\eqref{eq:8} is in practice much easier to evaluate explicitly than the left hand side when given a trial state $A$, Theorem~\ref{thm:4} becomes important if one wants to derive an upper bound for the minimal energy of such a functional. The left hand side of Eq.~\eqref{eq:7} in contrast is important since it allows one to prove upper or lower bounds for the relative entropy with the help of Klein's inequality, see \cite{Mathieu_Julien}.     
\label{rem:5}
\end{remark}
\section{Proof of Theorem~\ref{thm:3}}
\label{sec:2}
The main ingredient of our proof is the derivation of the formula stated in Lemma~\ref{lem:1}. Having this identity at hand, we show the convergence of the relative entropy by first showing it for the trace under the integral. In a second step, we argue why the limit can be interchanged with the integrals over $\lambda$ and $t$. At this point Theorem~\ref{thm:1} enters the analysis in a crucial way.

Since $\varphi'$ is operator monotone on $(0,1)$ there exists a unique Borel probability measure $\mu$ on $[-1,1]$ such that (see \cite[Corollary V.4.5]{Bhatia1997}) 
\begin{equation}
\varphi'(x) = a + b \int_{-1}^{1} \frac{2x-1}{1- \lambda(2x-1)} \text{d}\mu(\lambda), 
\label{eq:10}
\end{equation}
with $b \geq 0$ (To be precise, $\mu$ is unique only if $b>0$.). When integrating the above expression, one obtains a primitive for $\varphi'$ which is of the form
\begin{equation}
\varphi(x) = ax+c-\frac{b}{2} \int_{-1}^{1} \left( \frac{2x-1}{\lambda} + \frac{\ln \left(1+\lambda(1-2x) \right)}{\lambda^2} \right) \text{d}\mu(\lambda).
\label{eq:11}
\end{equation}
Since $x \mapsto - \ln(x)$ is an operator convex function, the same holds true for $\varphi$. 

To keep the main argumentation straight, we first prove two technical Lemmata. The first concerns the relation between the regularity of $\varphi$ at the endpoints of the interval $[0,1]$ and the behavior of the measure $\mu$ in the vicinity of $-1$ and $1$.
\begin{lemma}
Assume $\varphi \in C^0([0,1],\mathbb{R})$ such that $\varphi'$ is operator monotone on $(0,1)$. Then $\mu(\lbrace -1 \rbrace) = 0 = \mu(\lbrace 1 \rbrace)$,
\begin{equation}
\int_{1/2}^1 -\ln(1-\lambda) \text{d}\mu(\lambda) < \infty \quad \quad \text{and} \quad \quad \int_{-1}^{-1/2} -\ln(1+\lambda) \text{d}\mu(\lambda) < \infty.
\label{eq:12}
\end{equation}
If in addition $\varphi' \in C^0([0,1],\mathbb{R})$, the stronger implications
\begin{equation}
\int_{1/2}^1 \frac{1}{1-\lambda} \text{d}\mu(\lambda) < \infty \quad \quad \text{and} \quad \quad \int_{-1}^{-1/2} \frac{1}{1+\lambda} \text{d}\mu(\lambda) < \infty
\label{eq:13}
\end{equation}
hold. In case $\varphi'$ is not continuous at $1$ the first integral in Eq.~\eqref{eq:13} equals $+\infty$ and if it is not continuous at $0$ this is true for the second integral.   
\label{lem:2}
\end{lemma}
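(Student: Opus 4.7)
The strategy is to read everything off the integral representations~\eqref{eq:10} and~\eqref{eq:11} using the elementary inequality $\ln(1-u)\leq -u$, valid for $u<1$. Introduce the shorthand $f_{\lambda}(x):=\frac{2x-1}{\lambda}+\frac{\ln(1+\lambda(1-2x))}{\lambda^{2}}$, so that Eq.~\eqref{eq:11} becomes $\varphi(x)=ax+c-\frac{b}{2}\int_{-1}^{1}f_{\lambda}(x)\,\text{d}\mu(\lambda)$. With $u=\lambda(2x-1)$ the above inequality gives $f_{\lambda}(x)\leq 0$ on $[-1,1]\times(0,1)$; the pole at $\lambda=0$ is removable with limit $-(2x-1)^{2}/2$, and $f_{\lambda}(x)=-\infty$ occurs exactly in the two corners $(\lambda,x)=(1,1)$ and $(-1,0)$ through the logarithm. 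One may assume $b>0$, since $b=0$ makes $\varphi$ affine and $\mu$ is then a free parameter that can be chosen to satisfy all conclusions trivially.

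For the assertion $\mu(\{\pm1\})=0$ I would suppose $\mu(\{1\})>0$ and apply Fatou's lemma to the non-negative family $(-f_{\lambda}(x))_{x\in(0,1)}$ as $x\to 1^{-}$. The pointwise $\liminf$ equals $-f_{\lambda}(1)\in[0,\infty)$ for $\lambda<1$ and equals $+\infty$ at $\lambda=1$, so the hypothesis $\mu(\{1\})>0$ forces $\int\liminf_{x\to 1^{-}}(-f_{\lambda}(x))\,\text{d}\mu(\lambda)=+\infty$. By Fatou the same bound holds for $\liminf_{x\to 1^{-}}\int(-f_{\lambda}(x))\,\text{d}\mu(\lambda)$, which translates into $\varphi(1^{-})=+\infty$ and contradicts $\varphi\in C^{0}([0,1],\mathbb{R})$; the case $\lambda=-1$ is identical with $x\to 0^{+}$. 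Once $\mu(\{1\})=0$ is known, evaluate $f_{\lambda}(1)=\frac{1}{\lambda}+\frac{\ln(1-\lambda)}{\lambda^{2}}$ and split $[-1,1]=[-1,1/2]\cup[1/2,1]$: on the first piece the integrand is bounded (it extends continuously at $\lambda=0$), while on $[1/2,1]$ both $\lambda^{-1}$ and $\lambda^{-2}$ are bounded above and below. A second application of Fatou, combined with finiteness of $\varphi(1)$, yields $\int_{1/2}^{1}-\ln(1-\lambda)\,\text{d}\mu(\lambda)<\infty$, which is the first integral in~\eqref{eq:12}; evaluating at $x=0$ rather than $x=1$ produces the second.

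For the stronger implications in~\eqref{eq:13} I would work with~\eqref{eq:10}. Setting $y=2x-1$, a short computation of $\partial_{y}\frac{y}{1-\lambda y}=\frac{1}{(1-\lambda y)^{2}}\geq 0$ shows that the integrand is monotonically increasing in $y$ for every fixed $\lambda\in(-1,1)$, hence monotone convergence as $x\to 1^{-}$ gives $\lim_{x\to 1^{-}}\varphi'(x)=a+b\int_{-1}^{1}\frac{\text{d}\mu(\lambda)}{1-\lambda}\in\mathbb{R}\cup\{+\infty\}$. Since $\varphi'$ is non-decreasing, this limit is finite iff $\varphi'$ is continuous at $1$, and after observing that $(1-\lambda)^{-1}$ is bounded on $[-1,1/2]$ this is equivalent to the first integral in~\eqref{eq:13} being finite; the contrapositive delivers the ``not continuous at $1$'' clause. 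The analogous analysis at $x\to 0^{+}$, where the integrand now decreases monotonically to $-(1+\lambda)^{-1}$, yields the second integral of~\eqref{eq:13} and its non-continuity counterpart.

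The only real technical point is justifying the interchange of the boundary limit $x\to 1^{-}$ (respectively $x\to 0^{+}$) with the integration against $\text{d}\mu$. Once the sign structure $-f_{\lambda}\geq 0$ in~\eqref{eq:11} and the monotonicity in $x$ of the integrand of~\eqref{eq:10} have been isolated, Fatou's lemma and monotone convergence do all the work, and the remaining steps reduce to elementary comparisons of integrands near the endpoints of $[-1,1]$.
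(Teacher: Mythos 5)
Your proof is correct and takes essentially the same route as the paper: both arguments read \eqref{eq:12} off the representation \eqref{eq:11} of $\varphi$ and \eqref{eq:13} off the representation \eqref{eq:10} of $\varphi'$ by passing the limits $x\to 1^-$ and $x\to 0^+$ under the integral with Fatou's lemma (or, in your case for $\varphi'$, monotone convergence, which also cleanly delivers the divergence statement when $\varphi'$ fails to be continuous at an endpoint). Your explicit verification of the sign $-f_\lambda(x)\geq 0$ needed for Fatou, and your disposal of the degenerate case $b=0$, are minor additions of rigor over the paper's terser version of the same argument.
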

\begin{proof} 
We start with the first case, hence we assume that only $\varphi$ is continuous on $[0,1]$. Since the limits $\lim_{x \to 0} \varphi(x)$ and $\lim_{x \to 1} \varphi(x)$ exist we can conclude that $\mu(\lbrace -1 \rbrace) = 0 = \mu(\lbrace 1 \rbrace)$ holds. We further conclude that the following limit exists (see Eq.~\eqref{eq:11})
\begin{equation}
\infty > \lim_{x \to 1} \int_{1/2}^1 -\frac{\ln(1+\lambda(1-2x))}{\lambda^2} \text{d}\mu(\lambda) \geq \frac{1}{4} \int_{1/2}^1 -\ln(1-\lambda) \text{d}\mu(\lambda).
\label{eq:14}
\end{equation}
To come to the expression on the right hand side, we have applied Fatou's Lemma. Doing the same argumentation again, this time with the limit $x \to 0$, yields $\int_{-1}^{-1/2} \ln(1+ \lambda) \text{d}\mu(\lambda)< \infty$. If also $\varphi'$ is continuous on $[0,1]$, we compute
\begin{equation}
\lim_{x \to 1} \varphi'(x) = a+b \lim_{x \to 1} \int_{-1}^{1} \frac{2x-1}{1-\lambda (2x-1)} \text{d}\mu(\lambda) \geq a + \int_{-1}^1 \frac{1}{1-\lambda} \text{d}\mu(\lambda),
\label{eq:15}
\end{equation}
where, as before, we have applied Fatou's Lemma. The same procedure with $\lim_{x \to 0} -\varphi'(x)$ yields the other bound. Using the monotonicity of the integrand, one easily shows that the just discussed integrals diverge to $+\infty$ in case $\varphi'$ is not continuous at $0$ and/or $1$, respectively.  
\end{proof}

In order to obtain a handy formula for the operator $\frac{\text{d}}{\text{d} \alpha} \varphi\left( \alpha A + (1-\alpha)B \right)\vert_{\alpha = 0}$, we explicitly compute the directional derivative. 
\begin{lemma}
Assume $\varphi \in C^0([0,1],\mathbb{R})$ such that $\varphi'$ is operator monotone on $(0,1)$. If $\varphi'$ is continuous on $[0,1]$ or if it is discontinuous at $0$ and/or $1$ and $B$ has no eigenvalue at these points then
\begin{align}
&\frac{\text{d}}{\text{d}\alpha} \left( \psi, \varphi(\alpha A + (1-\alpha)B) \psi \right) \Big|_{\alpha=0} = a \left( \psi, (A-B) \psi \right) -\frac{b}{2} \int_{-1}^{1} \Bigg[ \left( \psi, \frac{2(A-B)}{\lambda} \psi \right) \label{eq:16} \\
&\hspace{3cm} -\frac{2}{\lambda} \int_{0}^{\infty} \left( \psi, \frac{1}{1+\lambda(1-2B) + t}(A-B) \frac{1}{1+\lambda(1-2B) + t} \psi \right) \text{d}t \Bigg] \text{d}\mu(\lambda), \nonumber
\end{align}
where $a$,$b$ and $\mu$ are defined by Eq.~\eqref{eq:10}. In case of the first scenario ($\varphi'$ continuous on $[0,1]$), the derivative is taken for all $\psi \in h$ while in the second scenario it is taken only for all $\psi$ in a dense set $D \subset h$. Explicitly, the set $D$ is given by $D = \cup_{\epsilon > 0} \mathds{1}(\epsilon < B < 1-\epsilon) h$ in case $0$ and $1$ are points of discontinuity of $\varphi'$ and by the obvious generalization when $\varphi'$ is discontinuous only at one of these points. This accounts for the fact that the limiting operator may be unbounded. In case $\varphi'$ has discontinuities and $B$ has eigenvalues at at least one of these points, we have to treat the above limit with $\psi$ being one of the eigenvectors to the just mentioned eigenvalues separately. We distinguish between two cases. If $(A-B) \psi \neq 0$ we have
\begin{equation}
-\lim_{\alpha \to 0} \left( \psi, \left[ \frac{\varphi(\alpha A + (1-\alpha B)) - \varphi(B)}{\alpha} \right] \psi \right) = \infty.
\label{eq:17}
\end{equation} 
If $(A-B) \psi =0$ instead, the limit in Eq.~\eqref{eq:17} equals zero.
\label{lem:3}
\end{lemma}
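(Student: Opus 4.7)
The plan is to substitute the integral representation~\eqref{eq:11} for $\varphi$ and differentiate termwise at $\alpha = 0$. Write $Y(\alpha) := \alpha A + (1-\alpha)B$ and $R_t := [1+\lambda(1-2B)+t]^{-1}$. The affine part $aY(\alpha)+c$ trivially yields $a(A-B)$. For the logarithmic part, I invoke the operator identity
\[
\ln X = \int_0^\infty \left[\frac{1}{1+t} - \frac{1}{X+t}\right] dt \qquad (X > 0),
\]
applied to $X = W(\alpha) := 1+\lambda(1-2Y(\alpha))$, which is nonnegative since $Y(\alpha)\in[0,1]$ and $|\lambda|\le 1$. A standard resolvent calculation gives
\[
\frac{d}{d\alpha}\bigl[W(\alpha)+t\bigr]^{-1}\Big|_{\alpha=0} = 2\lambda\, R_t\,(A-B)\, R_t,
\]
and inserting this into $\ln W(\alpha)$ and dividing by $\lambda^2$ reproduces the bracketed integrand in~\eqref{eq:16}.

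The central task is to justify, in the weak sense $(\psi,\,\cdot\,\psi)$, interchanging $d/d\alpha$ with the integrals over $t$ and $\lambda$ for $\psi \in D$. Fix $\epsilon>0$ with $\psi = \mathds{1}(\epsilon < B < 1-\epsilon)\psi$. On the range of this spectral projection one has $W(\alpha) \ge 2\epsilon$ uniformly in $\lambda \in [-1,1]$ and for $|\alpha|$ small, so $\|R_t \psi\| \le \|\psi\|/(2\epsilon + t)$; combined with $\|A-B\|\le 1$, the matrix element $|(\psi, R_t(A-B)R_t\psi)|$ is dominated by $2\|\psi\|^2/(2\epsilon+t)^2$, a $t$-integrable majorant. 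A Taylor expansion of $R_t$ in powers of $\lambda$ shows that the apparent $1/\lambda$ singularity at $\lambda = 0$ in the bracket of~\eqref{eq:16} cancels (the leading term reduces to the anticommutator $\{(1-2B),A-B\}$), leaving a bounded integrand there; near $\lambda = \pm 1$ either Lemma~\ref{lem:2} (when $\varphi' \in C^0([0,1])$) or the uniform $\epsilon$-lower bound on $W$ (when $\varphi'$ is discontinuous at an endpoint) furnishes integrability against $d\mu$. Dominated convergence then yields~\eqref{eq:16}.

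For the remaining case where $\psi$ is a $B$-eigenvector at a discontinuity of $\varphi'$, take $B\psi = 0$ (the case $B\psi=\psi$ is symmetric). If $(A-B)\psi = 0$, then $A\psi = B\psi = 0$, so $Y(\alpha)\psi = 0$ and $\varphi(Y(\alpha))\psi = \varphi(0)\psi$ for every $\alpha$, giving a zero limit. If $(A-B)\psi \ne 0$, then $c_0 := (\psi, A\psi) > 0$, because $A\ge 0$ forces $A\psi = 0$ whenever $(\psi, A\psi) = 0$. Since $\psi \notin D$, I truncate the $\mu$-integral to $\lambda \in [-1+\delta, 1]$: the previous dominated-convergence argument applies on this subinterval and gives a contribution to $\alpha^{-1}(\psi, [\varphi(Y(\alpha))-\varphi(B)]\psi)$ equal (in the limit $\alpha\to 0$) to $ac_0 - bc_0\int_{-1+\delta}^{1}(1+\lambda)^{-1}d\mu(\lambda)$, while operator convexity of $\varphi$ supplies a uniform upper bound on the complementary $\lambda \in [-1,-1+\delta)$ contribution. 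Taking $\delta \to 0^+$ and using the divergence $\int_{-1}^{-1/2}(1+\lambda)^{-1}d\mu = +\infty$ from Lemma~\ref{lem:2} produces~\eqref{eq:17}.

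The main obstacle is the simultaneous control of the $\lambda \to \pm 1$ and $t \to \infty$ limits when $\varphi'$ has an endpoint singularity; here Lemma~\ref{lem:2} and the specific choice of $D$ as the union $\cup_{\epsilon>0} \mathds{1}(\epsilon<B<1-\epsilon)h$ are essential to uniformly bound the resolvents $R_t$ away from their potential poles.
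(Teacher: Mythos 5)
Your overall strategy (insert the integral representation \eqref{eq:11}, use $\ln X=\int_0^\infty[(1+t)^{-1}-(X+t)^{-1}]\,\dda t$, cancel the $1/\lambda$ singularity by expanding the resolvent, and handle the eigenvector case via divergence of $\int(1+\lambda)^{-1}\dd\mu$) is the same as the paper's, but there is a genuine gap in the domination step for the case where $\varphi'$ is discontinuous at an endpoint. You claim that for $\psi$ in the range of $P_\epsilon:=\mathds{1}(\epsilon<B<1-\epsilon)$ one has $W(\alpha)\ge 2\epsilon$ ``on the range of this spectral projection'' and hence $\Vert [W(\alpha)+t]^{-1}\psi\Vert\le\Vert\psi\Vert/(2\epsilon+t)$ uniformly in $\lambda$ and small $\alpha$. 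This is false for $\alpha\neq 0$: since $A$ need not commute with $B$, the operator $W(\alpha)=1+\lambda(1-2(\alpha A+(1-\alpha)B))$ does not leave $\mathrm{Ran}(P_\epsilon)$ invariant, so a lower bound on the compression $P_\epsilon W(\alpha)P_\epsilon$ gives no control on $[W(\alpha)+t]^{-1}\psi$, which in general leaves $\mathrm{Ran}(P_\epsilon)$. The only bound available for that resolvent is the operator-norm bound $(1-|\lambda|+t)^{-1}$, which degenerates as $\lambda\to\pm1$. The $\epsilon$-bound is legitimate only for the resolvent of $B$ itself. A second, related problem is that you dominate the formal derivative $2\lambda R_t(A-B)R_t$ (both resolvents at $\alpha=0$), whereas dominated convergence requires a majorant for the \emph{difference quotients}, which contain one resolvent at $\alpha A+(1-\alpha)B$; that is precisely where the uniform-in-$\alpha$ bound is needed and fails.

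The repair is the asymmetric estimate of the paper: bound $\Vert R_t(B)\psi\Vert\le (2\epsilon+t)^{-1}$ using $\psi\in\mathrm{Ran}(P_\epsilon)$, bound the $\alpha$-dependent resolvent only in operator norm by $(1-|\lambda|+t)^{-1}$, and integrate over $t$; this yields a majorant of the form $C(\epsilon)\left(-\ln(1-|\lambda|)+1\right)$ near $\lambda=\pm1$, and its $\mu$-integrability is exactly the content of the first half of Lemma~\ref{lem:2}, Eq.~\eqref{eq:12} (the logarithmic estimate, valid whenever $\varphi\in C^0([0,1])$), which your argument never invokes in this case. You do cite Eq.~\eqref{eq:13} for the case $\varphi'\in C^0([0,1])$, which is correct there, but when $\varphi'$ is singular at an endpoint Eq.~\eqref{eq:13} fails and Eq.~\eqref{eq:12} is the estimate that saves the argument; as written, your proof has no valid integrable majorant in that regime. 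The remaining parts (cancellation at $\lambda=0$, the two eigenvector subcases, with the secant upper bound from operator convexity controlling the $\lambda\in[-1,-1+\delta)$ tail) are essentially sound, though the last of these is argued somewhat differently from the paper, which instead uses operator convexity of $x\mapsto x^{-1}$ at the resolvent level together with monotone convergence.
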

\begin{proof}
Using Eq.~\eqref{eq:11}, one can easily check the identity
\begin{align}
&\frac{\text{d}}{\text{d}\alpha} \left( \psi, \varphi\left( \alpha A + (1-\alpha)B \right) \psi \right) \Big|_{\alpha=0} = a \left( \psi, (A-B) \psi \right) - \frac{b}{2} \lim_{\alpha \to 0} \int_{-1}^{1} \Bigg[ \left( \psi, \frac{2(A-B)}{\lambda} \psi \right) \label{eq:18} \\
&\hspace{3cm} + \left( \psi, \frac{\ln\left( 1+\lambda (1-2(B+\alpha(A-B))) \right) - \ln\left( 1+\lambda(1-2B) \right)}{\alpha \lambda^2} \psi \right) \Bigg] \text{d}\mu(\lambda). \nonumber 
\end{align}
The second term is just the difference quotient defining the directional derivative of the second term in Eq.~\eqref{eq:11}. Let us have a closer look at the term with the logarithms. We use the formula $\ln(x)=\int_0^{\infty} \left( \frac{1}{1+t} - \frac{1}{x+t} \right) \text{d}t$ and apply the resolvent identity once, to see that it can be written as
\begin{align}
&\left( \psi, \frac{\ln\left( 1+\lambda (1-2(B+\alpha(A-B))) \right) - \ln\left( 1+\lambda(1-2B) \right)}{\alpha \lambda^2} \psi \right) = \label{eq:19} \\ 
&\hspace{1.5cm}-\frac{2}{\lambda} \int_{0}^{\infty} \left( \psi, \frac{1}{1+\lambda(1-2B) + t}(A-B) \frac{1}{1+\lambda(1-2(B+\alpha(A-B))) + t} \psi \right) \text{d}t. \nonumber
\end{align}
In order to explicitly compute the limit $\alpha \to 0$, it needs to be interchanged in a first step with the integral over $\lambda$ and in a second step with the integral over $t$. The second step will follow easily from the estimates used to show the first step since $\lambda \in (-1,1)$ is then fixed which implies that all resolvents are uniformly bounded. We therefore focus on the interchange of the limit $\alpha \to 0$ with the integral over $\lambda$. In order to be able to apply dominated convergence, we have to find a positive function $g \in L^1(\mu)$ with 
\begin{align}
\left| \left( \psi, \frac{2(A-B)}{\lambda} \psi \right) -\frac{2}{\lambda} \int_0^{\infty} \left( \psi, R(B)(A-B)R(B + \alpha(A-B)) \psi\right) \text{d}t \right| \leq g(\lambda) 
\label{eq:20}
\end{align}  
for all $\psi$ at least in a dense subset of $h$ (The case where $B$ has eigenvalues at points of discontinuity of $\varphi'$ will be treated at the end.). To shorten the writing, we have introduced the notation $R(B) = (1+\lambda(1-2B) + t)^{-1}$.

Let us first investigate the behavior of our integrand for $\lambda \in \left(-1+\epsilon,1-\epsilon \right)$. We write $R(B) = \frac{1}{1+t} - \frac{\lambda}{1+t} (1-2B) R(B)$ (and the same for $R(B+\alpha(A-B))$) and evaluate the contribution of the first term which reads
\begin{equation}
-\frac{2}{\lambda} \int_0^{\infty} \frac{1}{1+t} \left( \psi, (A-B)\psi \right) \frac{1}{1+t} \text{d}t = -\frac{2}{\lambda} \left( \psi,(A-B) \psi\right).
\label{eq:21}
\end{equation}
It cancels the first term under the integral on the right hand side of Eq.~\eqref{eq:18}. The three remaining terms have no singularity and can be bounded by a constant. 

In the vicinity of $\lambda = -1$ and $\lambda = 1$ the situation is a little different and one needs to argue more carefully. We will distinguish three cases depending on the regularity of $\varphi'$ at $0$ and $1$ and on the spectrum of $B$. First let us assume that $\varphi'$ is not continuous at $0$ and $1$ and that $B$ has no eigenvalues at these points. Let $D_{\epsilon} = \mathds{1}\left( \epsilon < B < 1-\epsilon \right) h$ and define $D= \cup_{\epsilon > 0} D_{\epsilon}$. Due to our assumptions on $B$, the set $D$ is dense in $h$. For $\psi \in D$, we investigate 
\begin{align}
\int_{0}^{\infty}& \left| \left( \psi, \frac{1}{1+\lambda (1-2B)+t} (A-B) \frac{1}{1+\lambda (1-2(\alpha A + (1-\alpha)B))+t} \psi \right) \right| \text{d}t \label{eq:22} \\
&\leq \int_{0}^{\infty} \left\Vert \frac{1}{1+\lambda (1-2B)+t} \psi \right\Vert \left\Vert A-B \right\Vert_{\infty} \left\Vert \frac{1}{1+\lambda (1-2(\alpha A + (1-\alpha)B))+t} \psi \right\Vert \text{d}t   
\nonumber
\end{align}
which is the relevant contribution from Eq.~\eqref{eq:20}. The part of the integral over $t$ from say $1$ to $\infty$ is easy to control. One just bounds the resolvents in operator norm by $1/t$. After the evaluation of the integral, we end up with a constant. To bound the other part of the integral over $t$ (the one from $0$ to $1$), we use the fact that $\psi \in D$ which implies that $\left\Vert \frac{1}{1+\lambda (1-2B)+t} \psi \right\Vert \leq 1/\epsilon$ for an $\epsilon > 0$ that depends on $\psi$. On the other hand $\left\Vert \frac{1}{1+\lambda (1-2(\alpha A + (1-\alpha)B))+t} \psi \right\Vert \leq \frac{1}{1+\lambda + t}$ for $\lambda$ close to $-1$. Putting this together, we obtain
\begin{align}
\int_{0}^{\infty} \Bigg\Vert \frac{1}{1+\lambda (1-2B)+t} \psi \Bigg\Vert \Big\Vert A-B \Big\Vert_{\infty} &\Bigg\Vert \frac{1}{1+\lambda (1-2(\alpha A + (1-\alpha)B))+t} \psi \Bigg\Vert \text{d}t \label{eq:23} \\
&\leq  \left\Vert A-B \right\Vert_{\infty} \left( \frac{1}{\epsilon} \int_{0}^{1} \frac{1}{1+\lambda+t} \text{d}t + C \right) \nonumber \\ 
&\leq C(\epsilon) \left( -\ln(1+\lambda) + 1 \right). \nonumber
\end{align}
A similar bound can be obtained for $\lambda$ close to $1$. There the function $-\ln(1-\lambda)$ enters the analysis. Hence, there exists a constant $C(\epsilon)$ depending on $\psi$ such that
\begin{align}
\Bigg| \Bigg( \psi, \frac{2(A-B)}{\lambda} \psi \Bigg) -\frac{2}{\lambda} \int_0^{\infty} \Bigg( \psi, R(B)(A-B)R(B + &\alpha(A-B)) \psi\Bigg) \text{d}t \Bigg| \label{eq:24} \\
&\leq C(\epsilon) \left( -\ln(1-|\lambda|) + 1 \right). \nonumber
\end{align}
Because of Lemma~\ref{lem:2}, the bound allows us to take the limit inside the integral and proves the claim in this situation.

Nearly the same argumentation goes through when $B$ has spectrum at $0$ and/or $1$ and if $\varphi'$ is continuous at these points. By bounding both resolvents like we did with the second in the previous step, that is $\left\Vert R(B)  \psi \right\Vert \leq (1-|\lambda| +t)^{-1}$ and the same with $\left\Vert R(\alpha A + (1-\alpha)B) \psi \right\Vert$, one obtains 
\begin{equation}
\Bigg| \Bigg( \psi, \frac{2(A-B)}{\lambda} \psi \Bigg) -\frac{2}{\lambda} \int_0^{\infty} \Bigg( \psi, R(B)(A-B)R(B + \alpha(A-B)) \psi\Bigg) \text{d}t \Bigg| \leq\frac{C}{1-|\lambda|}. \label{eq:25}
\end{equation}
Again due to Lemma~\ref{lem:2}, this is enough to interchange the limit and the integral. The case where $\varphi'$ is discontinuous only at one point is treated in the obvious way.

For the last case we have to assume that $\varphi'$ is not continuous at $0$ and/or $1$ and that $B$ has an eigenvalue at at least one of these points. We only investigate the relevant contribution. Let $\psi$ be the eigenvector of $B$ to the eigenvalue $0$ for example (the other cases go the same way). We will show that
\begin{equation}
\lim_{\alpha \to 0} \int_{-1}^{-1/2} \frac{-1}{\lambda} \int_0^{\infty} \left( \psi, R(B) (A-B) R(B+\alpha (A-B)) \psi \right) \text{d}t \ \text{d}\mu(\lambda) = \infty,
\label{eq:26}
\end{equation}
if $(A-B) \psi \neq 0$ and that the above limit equals zero in case $(A-B) \psi =0$. Using $R(B + \alpha(A-B)) = R(B) + 2 \alpha \lambda R(B + \alpha(A-B)) (A-B) R(B)$, the integrand can be written as
\begin{align}
&\frac{-1}{\lambda} \left( \psi, R(B)(A-B)R(B) \psi \right) - 2 \alpha \left( \psi, R(B) (A-B) R(B + \alpha(A-B)) (A-B) R(B) \psi \right) \nonumber \\
&\hspace{3cm} =\left( \frac{1}{1+\lambda+t} \right)^2 \left[ \frac{-1}{\lambda} \left( \psi, A \psi \right) - 2 \alpha \left( \psi, A R(B + \alpha(A-B)) A \psi \right) \right]. \label{eq:27}
\end{align}
Let us first assume that $(A-B) \psi \neq 0$ which implies that $\left( \psi, A \psi \right) >0$. Since the function $t \mapsto \frac{1}{t}$ is operator convex on the interval $(0,\infty)$, see \cite[Exercise~V.2.11]{Bhatia1997}, we know that $R(\alpha A + (1-\alpha)B) \leq \alpha R(A) + (1-\alpha) R(B)$. If we apply this inequality on the right hand side of Eq.~\eqref{eq:27} and discard all positive terms in order to obtain a lower bound, we find 
\begin{align}
\bigg( &\frac{1}{1+\lambda+t} \bigg)^2 \left[ \frac{-1}{\lambda} \left( \psi, A \psi \right) - 2 \alpha \left( \psi, A R(B + \alpha(A-B)) A \psi \right) \right] \label{eq:27b} \\
&\hspace{1cm}\geq \left( \frac{1}{1+\lambda+t} \right)^2 \left[ \frac{-1}{\lambda} \left( \psi, A \psi \right) - 2 \alpha^2 \left( \psi, A R(A) A \psi \right) - 2 \alpha \left( \psi, A R(B) A \psi \right) \right]. \nonumber
\end{align}
The right hand side of this equation, viewed as a function of $\alpha$, is certainly monotone and so we can use monotone convergence to show that
\begin{align}
\lim_{\alpha \to 0} \int_{-1}^{-1/2} \int_0^{\infty} \left( \frac{1}{1+\lambda+t} \right)^2 &\bigg[ \frac{-1}{\lambda} \left( \psi, A \psi \right) - 2 \alpha^2 \left( \psi, A R(A) A \psi \right) \label{eq:28} \\
&\hspace{2.3cm}- 2 \alpha \left( \psi, A R(B) A \psi \right) \bigg] \text{d}t \ \text{d}\mu(\lambda) \nonumber \\
&= \left(\psi, A \psi \right) \int_{-1}^{-1/2} \frac{-1}{\lambda} \int_{0}^{\infty} \left( \frac{1}{1+\lambda + t} \right)^2 \text{d}t \ \text{d}\mu(\lambda) \nonumber \\
&\geq \left( \psi, A \psi \right) \int_{-1}^{-1/2} \frac{1}{1+\lambda} \text{d}\mu(\lambda) = \infty. \nonumber
\end{align}
The last equality is achieved with the help of Lemma~\ref{lem:2}. Now assume that $(A-B) \psi= 0$ which means that $A \psi = 0$. Hence, $\left[ \alpha A + (1-\alpha) B \right] \psi = 0$ and $\varphi(\alpha A + (1-\alpha) B) \psi = \varphi(0) \psi$. Since this expression is a constant the derivative with respect to $\alpha$ vanishes. A similar argument can be done when $B$ has $1$ as an eigenvalue. This concludes the proof of Lemma~\ref{lem:3}.
\end{proof}

Before we come to the main part of the proof, we have to argue how the trace on the right hand side of Eq.~\eqref{eq:7} can be defined. Let us for the moment assume that $B$ has no eigenvalues at points of discontinuity of $\varphi'$ with $(A-B) \neq 0$ on the corresponding eigenspaces. Then by Lemma~\ref{lem:3}, we can define the quadratic form
\begin{equation}
q(\psi,\eta) = \lim_{\alpha \to 0} \left( \psi, \left[ \varphi(A) - \varphi(B) -   \frac{\varphi(B+\alpha(A-B)) - \varphi(B)}{\alpha} \right] \eta \right) \label{eq:28b}
\end{equation}
on the dense set $D \subset h$ (The set $D$ has been defined in Lemma~\ref{lem:3}.). The operator convexity of $\varphi$ implies that $\frac{\varphi(B+\alpha(A-B)) - \varphi(B)}{\alpha} \leq \varphi(A) - \varphi(B)$ holds for all $0 < \alpha \leq 1$. Since the inequality is preserved by the limiting procedure $\alpha \to 0$ we conclude that $q$ is positive. It is an easy exercise to check with the methods used in the proof of Lemma~\ref{lem:3}, that on $D$, the operator 
\begin{align}
\varphi(A)& - \varphi(B) - a (A-B) \label{eq:28c} \\
&+ \frac{b}{2} \int_{-1}^{1} \Bigg[ \frac{2(A-B)}{\lambda}  -\frac{2}{\lambda} \int_{0}^{\infty} \frac{1}{1+\lambda(1-2B) + t}(A-B) \frac{1}{1+\lambda(1-2B) + t} \text{d}t \Bigg] \text{d}\mu(\lambda), \nonumber
\end{align}
is well-defined, symmetric and due to the previous reasoning also positive [compare with Eq.~\eqref{eq:16}]. Again by Lemma~\ref{lem:3}, its associated quadratic form is $q$. The theorem on the Friedrichs extension tells us that $q$ is closable and that its closure $(\hat{q},\mathcal{Q}(\hat{q}))$ is the quadratic form of a unique self-adjoint operator $(T,\mathcal{D}(T))$ whose domain $\mathcal{D}(T)$ is contained in the form domain $\mathcal{Q}(\hat{q})$ of $\hat{q}$, see \cite[Theorem~X.23]{ReedSimon2}. Additionally, $T$ is positive. Having the Friedrichs extension at hand, we can define the right hand side of Eq.~\eqref{eq:7} to be the trace of $T$. To that end, we restrict our attention to bases $\lbrace e_{\beta} \rbrace_{\beta = 1}^{\infty}$ of $h$ with $e_{\beta} \in \mathcal{Q}(\hat{q})$ for all $\beta \in \mathbb{N}$ and define $\text{Tr}(T)= \sum_{\beta = 1}^{\infty} \hat{q}(e_{\beta},e_{\beta})$. Of course, this definition does not depend on the choice of the basis. It yields the usual notion of trace when $T$ is trace-class and gives $\text{Tr}(T) = +\infty$ otherwise. If $B$ has an eigenvalue at a point of discontinuity of $\varphi'$ with $A-B \neq 0$ on the corresponding eigenspace, Lemma~\ref{lem:3} suggest to define the trace of the right hand side of Eq.~\eqref{eq:7} to be $+\infty$. This goes hand in hand with the definition of Lewin and Sabin mentioned in the beginning of the introduction. 

Having these prerequisites at hand, we come to the main part of our proof. If $B$ has an eigenvalue at a point of discontinuity of $\varphi'$ and $(A-B) \neq 0$ on the corresponding eigenspace then both sides of Eq.~\eqref{eq:7} equal $+\infty$. For the right hand side this has been discussed in the previous paragraph while for the left hand side, this can be seen by choosing $P_1$ such that the eigenspace of the just mentioned eigenvalue lies in its range. Hence, we can exclude this case in what follows. The key point of our proof is the more explicit formula Eq.~\eqref{eq:7b} for the trace of the operator $\varphi(A) - \varphi(B) - \frac{\text{d}}{\text{d}\alpha} \varphi(\alpha A + (1-\alpha) B) |_{\alpha=0}$ which we derive now. Using Eq.~\eqref{eq:11}, the operator $\varphi(A) - \varphi(B)$ can be written as
\begin{align}
\varphi(A) - \varphi(B) &= a(A-B) \label{eq:29} \\
&\ \ \ - \frac{b}{2} \int_{-1}^{1} \left[ \frac{2(A-B)}{\lambda} + \frac{\ln(1+\lambda(1-2A)) - \ln(1+\lambda(1-2B))}{\lambda^2} \right] \text{d}\mu(\lambda). \nonumber
\end{align}
In the next step, we write the difference of the two logarithms in Eq.~\eqref{eq:29} with the help of the formula $\ln(x) = \int_{0}^{\infty} \left( \frac{1}{1+t} - \frac{1}{x+t} \right) \text{d}t$ as an integral over resolvents. When we add the explicit representation for $\frac{\text{d}}{\text{d}\alpha} \varphi(\alpha A + (1-\alpha)B)|_{\alpha=0}$ that has been derived in Lemma~\ref{lem:3} and apply the resolvent identity twice, we arrive at the formula  
\begin{align}
&\varphi(A) - \varphi(B) - \frac{\text{d}}{\text{d}\alpha} \varphi(\alpha A + (1-\alpha)B)\big|_{\alpha=0} = \label{eq:30} \\
&\hspace{1cm} 2 b \int_{-1}^{1} \int_{0}^{\infty} \frac{1}{1+\lambda(1-2 B) + t} Q \frac{1}{1+\lambda(1-2 A) + t} Q \frac{1}{1+\lambda(1-2 B) + t} \text{d}t \ \text{d}\mu(\lambda), \nonumber
\end{align}
where we have introduced the shortcut $Q=(A-B)$. Taking the trace on both sides, we can commute it with the integrals because the integrand is a positive operator and obtain Eq.~\eqref{eq:7b}. Hence, we have proved Lemma~\ref{lem:1}.

Now let $\left\lbrace P_n \right\rbrace_{n=1}^{\infty}$ be an increasing sequence of finite-dimensional projections  that converges to $1$ in the strong operator topology. Because for matrices the two ways of writing the relative entropy are the same (see Remark~\ref{rem:4}) we have the formula
\begin{align}
\mathcal{H}(A_n,B_n) = 2 b \int_{-1}^{1} \int_{0}^{\infty} \tr \left[ R(B_n) Q_n R(A_n) Q_n R(B_n) \right] \text{d}t \ \text{d}\mu(\lambda)
\label{eq:31}
\end{align}
with $A_n = P_n A P_n$ and so on. We will first show that $\lim_{n \to \infty} \tr \left[ R(B_n) Q_n R(A_n) Q_n R(B_n) \right] = \tr \left[ R(B) Q R(A) Q R(B) \right]$ and then argue why we can interchange the limit with the two integrals. 

Let $m \geq 1$. In order to be able to restrict the trace on the right hand side of Eq.~\eqref{eq:31} to a finite-dimensional subspace, we first investigate
\begin{align}
\tr \big[ (1-P_m) R(B_n) Q_n &R(A_n) Q_n R(B_n) (1-P_m) \big] \label{eq:32} \\
&\leq \tr \left[ (1-P_m) R(B_n) Q_n^2 R(B_n) (1-P_m) \right] \frac{1}{1-|\lambda|+t} \nonumber \\
&\leq \tr \left[ (1-P_m) R(B_n) P_n Q^2 P_n R(B_n) (1-P_m) \right] \frac{1}{1-|\lambda|+t}. \nonumber
\end{align}
Let us for the moment assume that $Q$ is Hilbert-Schmidt which implies that it can be written as $Q=\sum_{\beta=1}^{\infty} q_{\beta} \vert \psi_{\beta} \rangle\langle \psi_{\beta} \vert$ with $\sum_{\beta = 1}^{\infty} q_{\beta}^2 < \infty$. The case when this does not hold true is taken care of at the end. Using the cyclicity of the trace, we write
\begin{align}
\tr \big[ (1-P_m) R(B_n) P_n Q^2 P_n R(B_n) (1-P_m) \big] &= \tr \left[ Q P_n R(B_n) (1-P_m) R(B_n) P_n Q \right] \label{eq:33} \\ 
&= \sum_{\alpha=1}^{k} \left( \psi_{\alpha}, Q P_n R(B_n) (1-P_m) R(B_n) P_n Q \psi_{\alpha} \right) \nonumber \\
&+ \sum_{\alpha=k+1}^{\infty} \left( \psi_{\alpha}, Q P_n R(B_n) (1-P_m) R(B_n) P_n Q \psi_{\alpha} \right). \nonumber
\end{align}
The term in the last line on the right hand side of Eq.~\eqref{eq:33} can be bounded uniformly in $n$ as the next calculation shows,
\begin{align}
\Bigg| \sum_{\alpha=k+1}^{\infty} \big( \psi_{\alpha}, Q P_n R(B_n) (1-P_m) &R(B_n) P_n Q \psi_{\alpha} \big) \Bigg| \label{eq:34} \\   
&= \left| \sum_{\alpha=k+1}^{\infty} q_{\alpha}^2 \left( \psi_{\alpha}, P_n R(B_n) (1-P_m) R(B_n) P_n \psi_{\alpha} \right) \right| \nonumber \\
&\leq \left( \frac{1}{1-|\lambda|+t} \right)^2 \sum_{\alpha=k+1}^{\infty} q_{\alpha}^2. \nonumber 
\end{align}
The right hand side of Eq.~\eqref{eq:34} goes to zero as $k$ tends to infinity for all $-1 < \lambda < 1$ and $t \geq 0$ due to the assumptions on $Q$. On the other hand, $\sum_{\alpha=1}^{k} \left( \psi_{\alpha}, Q P_n R(B_n) (1-P_m) R(B_n) P_n Q \psi_{\alpha} \right) \to \sum_{\alpha=1}^{k} \left( \psi_{\alpha}, Q P R(B) (1-P_m) R(B) Q \psi_{\alpha} \right)$ for $n \to \infty$ because the sum is finite and the operator in the middle is convergent in the strong operator topology, see \cite[Theorem VIII.20]{ReedSimon1}. When we consider Eq.~\eqref{eq:32} again and take the limit $n \to \infty$ followed by the limit $k \to \infty$, we arrive at
\begin{align}
\lim_{n \to \infty} \tr \big[ (1-P_m) R(B_n) &Q^2 R(B_n) (1-P_m) \big] \label{eq:35} \\
&= \tr \big[ (1-P_m) R(B) Q^2 R(B) (1-P_m) \big]. \nonumber
\end{align}
Let us denote the left hand side of this equation by $\delta(n,m)$ and the right hand side by $\delta(m)$. By construction, $\lim_{m \to \infty} \delta(m) = 0$ holds. Using this result, we easily get the following two inequalities
\begin{align}
&\tr \left[ R(B_n) Q_n R(A_n) Q_n R(B_n) \right] \leq \tr \left[P_m R(B_n) Q_n R(A_n) Q_n R(B_n) P_m \right] + \tilde{\delta}(n,m), \label{eq:36} \\
&\tr \left[ R(B_n) Q_n R(A_n) Q_n R(B_n) \right] \geq \tr \left[P_m R(B_n) Q_n R(A_n) Q_n R(B_n) P_m \right], \nonumber 
\end{align}
where $\tilde{\delta}(n,m) = \delta(n,m) (1-|\lambda|+t)^{-1}$. Taking first the limit $n \to \infty$ and then the limit $m \to \infty$ in the above equations, we conclude that $\lim_{n \to \infty} \tr \left[ R(B_n) Q_n R(A_n) Q_n R(B_n) \right] = \tr \left[ R(B) Q R(A) Q R(B) \right]$ for all $-1 < \lambda < 1$.

The next step in the proof is to interchange the limit $n \to \infty$ and the integrals. Let us start with the integral over $t$. Since we only need a bound for almost every $\lambda$ to apply dominated convergence we can assume that $-1<\lambda<1$. Under these conditions a dominating function is easily constructed because
\begin{align}
\tr  \left[ R(B_n) Q_n R(A_n) Q_n R(B_n) \right] \leq \left( \frac{1}{1-|\lambda|+t} \right)^3 \left\Vert Q \right\Vert_2^2.
\label{eq:37}
\end{align}
Hence, we have shown that
\begin{align}
\int_{-1}^1 \int_0^{\infty} \tr  &\left[ R(B) Q R(A) Q R(B) \right] \text{d}t \text{d}\mu(\lambda) \label{eq:38} \\
&\hspace{2.9cm} = \int_{-1}^1 \lim_{n \to \infty} \left( \int_0^{\infty} \tr  \left[ R(B_n) Q_n R(A_n) Q_n R(B_n) \right] \text{d}t \right) \text{d}\mu(\lambda). \nonumber
\end{align}

To interchange the limit with the first integral, we have to argue more carefully and use the monotonicity of the relative entropy. With similar but somewhat easier arguments than the ones used to prove Lemma~\ref{lem:3}, we can show that 
\begin{align}
\frac{1}{\lambda^2} \Big( -&\ln(1+\lambda(1-2A_n)) + \ln(1+\lambda(1-2B_n)) \label{eq:39} \\
&\hspace{2cm} + \frac{\text{d}}{\text{d}\alpha} \ln(1+\lambda(1-2(\alpha A_n + (1-\alpha B_n))))|_{\alpha=0} \Big) \nonumber \\
&= \int_0^{\infty} R(B_n) Q_n R(A_n) Q_n R(B_n) \text{d}t. \nonumber 
\end{align}
Now we take the trace on both sides of the above equation. On the right hand side, we interchange the trace with the integral over $t$ and use the result from Eq.~\eqref{eq:38} to arrive at
\begin{align}
\int_{-1}^1 \int_0^{\infty} \tr  &\left[ R(B) Q R(A) Q R(B) \right] \text{d}t \text{d}\mu(\lambda) \label{eq:40} \\
&= \int_{-1}^1 \frac{1}{\lambda^2} \bigg( \lim_{n \to \infty} \tr \Big[ -\ln(1+\lambda(1-2A_n)) + \ln(1+\lambda(1-2B_n)) \nonumber  \\
&\hspace{2cm} + \frac{\text{d}}{\text{d}\alpha} \ln(1+\lambda(1-2(\alpha A_n + (1-\alpha B_n))))|_{\alpha=0} \Big] \bigg) \text{d}\mu(\lambda). \nonumber
\end{align}   
Since $x \mapsto \left(-\ln(x) \right)' = -\frac{1}{x}$ is operator monotone, the integrand on the right hand side of Eq.~\eqref{eq:40} is monotone in $n$ by Theorem~\ref{thm:1}. On the other hand, from what we said above, we know that it converges pointwise for all $-1 < \lambda < 1$ as $n$ tends to infinity. Therefore, the interchange of the limit $n \to \infty$ and the integral over $\lambda$ is justified by monotone convergence. This completes the proof for the case when $Q=(A-B)$ is Hilbert-Schmidt.

Now assume that $(A-B)$ is not Hilbert-Schmidt. From \cite[Theorem~3]{Mathieu_Julien}, we conclude that there is a constant $C>0$ such that
\begin{equation}
\lim_{n \to \infty} \mathcal{H}(A_n,B_n) \geq C \left\Vert (A-B) \right\Vert_2^2 = \infty.
\label{eq:41}
\end{equation}
On the other hand 
\begin{align}
\tr \left[ R(B)Q R(A) Q R(B) \right] \geq \tr \left[ R(B) Q^2 R(B) \right] \frac{1}{4+t} = \infty,
\label{eq:42}
\end{align}
where the equality on the right hand side is justified by the fact that $R(B)$ is bounded and invertible for all $-1 < \lambda < 1$. Hence, the right hand side of Eq.~\eqref{eq:7} equals $+ \infty$ as well. This completes the proof of Theorem~\ref{thm:3}. 
\section{Proof of Theorem~\ref{thm:4}}
\label{sec:3}
As in the proof of Theorem~\ref{thm:3}, we start with a Lemma in order not to interrupt the main argumentation. Throughout the whole section we assume that the $b$ in Eq.~\eqref{eq:10} is strictly positive. This is reasonable because otherwise the relative entropy equals zero.
\begin{lemma}
Assume that $(A-B)$ and $\varphi'(B)(A-B)$ are trace-class. Then
\begin{equation}
\sum_{\beta = 1}^{\infty} \int_{-1}^{1} \left| q_{\beta} \right| \left( \psi_{\beta}, \left| \frac{2B-1}{1-\lambda (2B-1)} \right| \psi_{\beta} \right) \text{d}\mu(\lambda) < \infty,
\label{eq:43}
\end{equation}
where $(A-B) = \sum_{\beta=1}^{\infty} q_{\beta} \vert \psi_{\beta} \rangle\langle \psi_{\beta} \vert$ with $\sum_{\beta = 1}^{\infty} \left| q_{\beta} \right| < \infty$.
\label{lem:4}
\end{lemma}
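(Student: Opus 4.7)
The plan is to convert the double sum-integral on the left-hand side of Eq.~\eqref{eq:43} into the trace of a trace-class operator, which the hypotheses force to be finite. I would start by exploiting the sign structure of the kernel $h(\lambda,x) := \frac{2x-1}{1-\lambda(2x-1)}$: for every $\lambda \in (-1,1)$ and $x \in [0,1]$, $h(\lambda,x)$ has the same sign as $2x-1$. Setting $S := \mathds{1}(B \geq 1/2) - \mathds{1}(B < 1/2)$, a bounded self-adjoint involution which commutes with every Borel function of $B$, the functional calculus yields the operator identity $|h(\lambda,B)| = S\, h(\lambda,B)$. Integrating against $\mu$ and invoking the representation Eq.~\eqref{eq:10},
\begin{equation*}
\int_{-1}^{1}\left|\frac{2B-1}{1-\lambda(2B-1)}\right|\,\text{d}\mu(\lambda) \;=\; \frac{S(\varphi'(B)-aI)}{b} \;=\; \frac{|\varphi'(B)-aI|}{b},
\end{equation*}
where the second equality uses that $\varphi'(B)-aI$ inherits the sign of $2B-1$ pointwise on the spectrum of $B$ (since $\mu$ is a positive measure). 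Tonelli's theorem, applicable because every term in sight is nonnegative, then reduces Eq.~\eqref{eq:43} to showing that $\sum_\beta |q_\beta|\,(\psi_\beta,|\varphi'(B)-aI|\psi_\beta) < \infty$.

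To establish this finiteness I would introduce the partial isometry $U := \sum_\beta \sgn(q_\beta)|\psi_\beta\rangle\langle\psi_\beta|$ from the polar decomposition $(A-B) = U|A-B|$, and note that both $U$ and $S$ are contractions. By hypothesis $(A-B)$ and $\varphi'(B)(A-B)$ are trace-class, hence so is their linear combination $(\varphi'(B)-aI)(A-B) = \varphi'(B)(A-B) - a(A-B)$, and therefore so is the operator $US(\varphi'(B)-aI)(A-B)$. Computing its trace in the eigenbasis $\{\psi_\gamma\}$ of $A-B$, and using $(A-B)\psi_\gamma = q_\gamma \psi_\gamma$, $U\psi_\gamma = \sgn(q_\gamma)\psi_\gamma$, and $S(\varphi'(B)-aI) = |\varphi'(B)-aI|$, I obtain
\begin{equation*}
\tr\big[US(\varphi'(B)-aI)(A-B)\big] \;=\; \sum_\gamma |q_\gamma|\,(\psi_\gamma,|\varphi'(B)-aI|\psi_\gamma),
\end{equation*}
which is finite because the left-hand side is the trace of a trace-class operator.

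The main obstacle is not the algebra but well-definedness. When $\varphi'$ is singular at an endpoint of $[0,1]$ the operator $\varphi'(B)$ is unbounded, and one must ensure that each $\psi_\beta$ with $q_\beta \neq 0$ lies in $\mathrm{Dom}(\varphi'(B))$ so that the expressions $(\psi_\beta,|\varphi'(B)-aI|\psi_\beta)$ and $US(\varphi'(B)-aI)(A-B)\psi_\gamma$ are meaningful. This, however, is forced by the hypothesis: the trace-class-ness of $\varphi'(B)(A-B)$ requires $\mathrm{ran}(A-B)\subset\mathrm{Dom}(\varphi'(B))$, and since $(A-B)\psi_\beta = q_\beta\psi_\beta$ this places $\psi_\beta$ in $\mathrm{Dom}(\varphi'(B))$ whenever $q_\beta\neq 0$. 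With this technical point disposed of, the argument reduces to a short functional-calculus and cyclicity-of-trace computation.
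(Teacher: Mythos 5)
Your proof is correct and follows essentially the same route as the paper: both arguments rest on the integral representation \eqref{eq:10}, the observation that $\frac{2x-1}{1-\lambda(2x-1)}$ has the sign of $2x-1$ uniformly in $\lambda$ (so that the absolute value can be pulled inside the $\lambda$-integral via the spectral projections of $B$ about $1/2$), the two-sided-ideal property of the trace class combined with partial isometries, and evaluation of the resulting trace in the eigenbasis of $A-B$ followed by Tonelli. Your version is if anything slightly more explicit than the paper's (concrete sign operators $U$ and $S$ in place of the paper's loosely invoked polar decomposition, and an explicit remark on the domain of $\varphi'(B)$), and your division by $b$ is harmless since $b>0$ is a standing assumption in that section.
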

\begin{proof}
The integral representation of $\varphi'$, Eq.~\eqref{eq:10}, tells us that
\begin{equation}
\varphi'(B)(A-B) = a(A-B) + b \int_{-1}^{1} \frac{2B-1}{1-\lambda (2B-1)} \text{d}\mu(\lambda) (A-B).
\label{eq:44}
\end{equation}
Because $(A-B)$ is trace-class by assumption we know that the second term on the right hand side of the above equation is trace-class as well. And due to the polar decomposition, there exist two partial isometries $U$ and $V$ such that
\begin{equation}
\int_{-1}^{1} \frac{2B-1}{1-\lambda (2B-1)} \text{d}\mu(\lambda) (A-B) = U \left| \int_{-1}^{1} \frac{2B-1}{1-\lambda (2B-1)} \text{d}\mu(\lambda) \right| \left| A-B \right| V.
\label{eq:45}
\end{equation}
Since the set of all trace-class operators is a two-sided ideal in the algebra of bounded operators $\mathcal{L}(h)$ we conclude that the term on the right hand side of Eq.~\eqref{eq:45} without $U$ and $V$ is trace-class as well. We decompose the operator $B$ in the way $B = B \ \mathds{1}(B < 1/2) + B \ \mathds{1}(B \geq 1/2)$ to see that the absolute value of the integral on the right hand side of Eq.~\eqref{eq:45} is given by $\left| \int_{-1}^{1} \frac{2B-1}{1-\lambda (2B-1)} \text{d}\mu(\lambda) \right| = \int_{-1}^{1} \left| \frac{2B-1}{1-\lambda (2B-1)} \right| \text{d}\mu(\lambda)$. Therefore,
\begin{align}
\infty > \tr \left| \int_{-1}^{1} \frac{2B-1}{1-\lambda (2B-1)} \text{d}\mu(\lambda) \right| \big| &A-B \big| \label{eq:46} \\
&= \sum_{\beta=1}^{\infty} \int_{-1}^{1} \left| q_{\beta} \right| \left( \psi_{\beta}, \left| \frac{2B-1}{1-\lambda (2B-1)} \right| \psi_{\beta} \right) \text{d}\mu(\lambda). \nonumber
\end{align}
This is what we intended to show.
\end{proof}

Having Lemma~\ref{lem:4} at hand, the proof of Theorem~\ref{thm:4}, that is the proof of the identity $\tr \left[ \varphi'(B)(A-B) \right] = \tr \left[ \frac{\text{d}}{\text{d} \alpha} \varphi \left( \alpha A + (1-\alpha)B \right) \big|_{\alpha = 0} \right]$, is in principle a straightforward computation that exploits the cyclicity of the trace. We start by inserting the integral representation of $\varphi'$ [Eq.~\eqref{eq:10}] into $\tr \left[\varphi'(B)(A-B) \right]$ to obtain  
\begin{align}
\tr \left[ \varphi'(B)(A-B) \right] &= a \tr (A-B) + b \tr \left[ \int_{-1}^1 \frac{2B-1}{1-\lambda (2B-1)} (A-B) \text{d}\mu(\lambda) \right] \label{eq:47} \\
&= a \tr (A-B) + b \sum_{\beta=1}^{\infty} \int_{-1}^1 q_{\beta} \left( \psi_{\beta}, \frac{2B-1}{1-\lambda (2B-1)} \psi_{\beta} \right) \text{d}\mu(\lambda). \nonumber
\end{align}
Here, $\left\lbrace \psi_\beta \right\rbrace_{\beta = 1}^{\infty}$ denotes the complete set of eigenfunctions of the self-adjoint operator $(A-B)$. We wish to interchange the sum over $\beta$ and the integral over $\lambda$ on the right hand side of the above equation. Using the bound 
\begin{equation}
\left| q_{\beta} \left( \psi_{\beta}, \frac{2B-1}{1-\lambda (2B-1)} \psi_{\beta} \right) \right| \leq \left| q_{\beta} \right| \left( \psi_{\beta}, \left| \frac{2B-1}{1-\lambda (2B-1)} \right| \psi_{\beta} \right) \label{eq:48}
\end{equation}
and Lemma~\ref{lem:4}, this is justified by an application of Fubini's theorem. On the other hand, the operator $\frac{2B-1}{1-\lambda (2B-1)} (A-B)$ is trace-class as long as $-1<\lambda<1$ because $(A-B)$ is trace-class and $\frac{2B-1}{1-\lambda (2B-1)}$ is bounded. We conclude that
\begin{equation}
\tr \left[ \int_{-1}^{1} \frac{2B-1}{1-\lambda (2B-1)} (A-B) \text{d}\mu(\lambda) \right] = \int_{-1}^{1} \tr \left[ \frac{2B-1}{1-\lambda (2B-1)} (A-B) \right] \text{d}\mu(\lambda). \label{eq:49}
\end{equation}
Using the identity
\begin{equation}
\frac{1-2B}{1+\lambda (1-2 B)} = \frac{1}{\lambda} - \frac{1}{\lambda} \int_{0}^{\infty} \left( \frac{1}{1+\lambda (1-2 B)+t} \right)^2 \text{d}t, 
\label{eq:50}
\end{equation}
Eq.~\eqref{eq:49} can be written as
\begin{align}
\tr \Bigg[ \int_{-1}^{1} &\frac{2B-1}{1-\lambda (2B-1)} (A-B) \text{d}\mu(\lambda) \Bigg] = \label{eq:51} \\
&= -\frac{1}{2} \int_{-1}^{1} \tr \left[ \left\lbrace \frac{2}{\lambda} - \frac{2}{\lambda} \int_{0}^{\infty} \left( \frac{1}{1+\lambda (1-2 B)+t} \right)^2 \text{d}t \right\rbrace (A-B) \right] \text{d}\mu(\lambda). \nonumber
\end{align}
With the bound
\begin{equation}
\left| \sum_{\beta=1}^{\infty} \left( \psi_{\beta}, \left( \frac{1}{1+\lambda (1-2 B)+t} \right)^2 (A-B) \psi_{\beta} \right) \right| \leq  \left( \frac{1}{1-|\lambda|+t} \right)^2 \left\Vert A-B \right\Vert_1 \label{eq:52}
\end{equation}
which holds for all $-1 < \lambda < 1$, we argue like above with Fubini that the trace can be interchanged with the integral over $t$. Now we can use the cyclicity of the trace to arrive at
\begin{align}
&\tr \left[b \int_{-1}^{1} \frac{2B-1}{1-\lambda (2B-1)} Q \text{d}\mu(\lambda) \right] = \label{eq:53} \\
&= \frac{-b}{2} \int_{-1}^{1} \left( \frac{2}{\lambda} \tr Q - \frac{2}{\lambda} \int_{0}^{\infty} \tr \left[ \frac{1}{1+\lambda (1-2 B)+t} Q \frac{1}{1+\lambda (1-2 B)+t} \right] \text{d}t \right) \text{d}\mu(\lambda). \nonumber
\end{align}
To shorten the writing, we have used the shortcut $Q=(A-B)$. Except for the fact that the trace is inside the integral, this is what we wanted to obtain (compare with the result of Lemma~\ref{lem:3}). 

Now we have to argue why we can take the trace out of the integral again which would complete the proof. By $Q_+$ and $Q_-$ we denote the positive and the negative part of the operator $Q=(A-B)$, respectively. First, we want to show that the above term with $Q$ replaced by $Q_+$ or by $Q_-$, that is
\begin{equation}
\int_{-1}^{1} \left( \frac{2}{\lambda} \tr Q_{\pm} - \frac{2}{\lambda} \int_{0}^{\infty} \tr \left[ \frac{1}{1+\lambda (1-2 B)+t} Q_{\pm} \frac{1}{1+\lambda (1-2 B)+t} \right] \text{d}t \right) \text{d}\mu(\lambda), 
\label{eq:54}
\end{equation}
is finite. To that end, we use the cyclicity of the trace to bring the two resolvents $(1+\lambda (1-2 B)+t)^{-1}$ together again, the bound from Eq.~\eqref{eq:48} [with $Q=(A-B)$ replaced by $Q_{\pm}$ on the left hand side] and Lemma~\ref{lem:4} another time. In other words, we go from Eq.~\eqref{eq:53} to Eq.~\eqref{eq:48} in backward order with $Q=(A-B)$ replaced by $Q_{\pm}$. This shows the finiteness of Eq.~\eqref{eq:54}.

Next, we go back to Eq.~\eqref{eq:54} and split the integral over $\lambda$ into three parts, one from $-1$ to $-1/2$, one from $-1/2$ to $1/2$ and a last one from $1/2$ to $1$. The integral from $-1/2$ to $1/2$ is easy to treat. We look at Eq.~\eqref{eq:54} again, adjust the boundaries of the integral over $\lambda$ to run from $-1/2$ to $1/2$ and evaluate the trace in an arbitrary basis. Like in the proof of Lemma~\ref{lem:3}, we show that there is no singularity at $\lambda = 0$. Together with the standard estimates used in the proof of Theorem~\ref{thm:3}, this implies that the expression inside the integral over $\lambda$ can be bounded by a constant. Since $\mu$ is a probability measure this is enough to apply dominated convergence and interchange the sum coming from the trace and the integral over $\lambda$. The fact that this works for any basis, shows that 
\begin{equation}
\int_{-1/2}^{1/2} \left( \frac{2}{\lambda} Q_{\pm} - \frac{2}{\lambda} \int_{0}^{\infty} \frac{1}{1+\lambda (1-2 B)+t} Q_{\pm} \frac{1}{1+\lambda (1-2 B)+t} \text{d}t \right) \text{d}\mu(\lambda), 
\label{eq:55}
\end{equation}
is trace-class and that for this term the trace and the integral can be interchanged.

In the next step, we investigate the integral from $1/2$ to $1$, that is Eq.~\eqref{eq:54} with the adjusted integral boundaries. Since 
\begin{align}
\int_{1/2}^{1} \frac{2}{\lambda} \tr Q_{\pm} \text{d}\mu(\lambda) \leq 4 \left\Vert Q \right\Vert_1
\label{eq:56} 
\end{align}
the first term inside the integral over $\lambda$ can be integrated separately. Additionally, the trace and the integral over $\lambda$ can be interchanged for this term as well. This implies that also 
\begin{equation} 
\int_{1/2}^{1} \frac{2}{\lambda} \int_{0}^{\infty} \tr \left[ \frac{1}{1+\lambda (1-2 B)+t} Q_{\pm} \frac{1}{1+\lambda (1-2 B)+t} \right] \text{d}t \text{d}\mu(\lambda)
\label{eq:57}
\end{equation}
is finite. Since the operator inside the trace is positive we can apply Fubini to interchange the trace with the integral over $t$ and afterwards with the integral over $\lambda$. The same arguments work for the integral from $-1$ to $-1/2$. Putting all this together, we have shown that
\begin{align}
b \int_{-1}^{1} \bigg( \frac{2}{\lambda} &\tr Q - \frac{2}{\lambda} \int_{0}^{\infty} \tr \left[ \frac{1}{1+\lambda (1-2 B)+t} Q \frac{1}{1+\lambda (1-2 B)+t} \right] \text{d}t \bigg) \text{d}\mu(\lambda) \label{eq:58} \\
&=\tr \left[ b \int_{-1}^{1} \left( \frac{2}{\lambda} Q - \frac{2}{\lambda} \int_{0}^{\infty} \frac{1}{1+\lambda (1-2 B)+t} Q \frac{1}{1+\lambda (1-2 B)+t} \text{d}t \right) \text{d}\mu(\lambda) \right], \nonumber
\end{align}
which together with Eq.~\eqref{eq:47} and Eq.~\eqref{eq:53} implies that
\begin{align}
\text{Tr}& \left[ \varphi'(B)(A-B) \right] \label{eq:60} \\
&=\text{Tr} \left[ a Q -\frac{b}{2} \int_{-1}^{1} \left\lbrace \frac{2Q}{\lambda} -\frac{2}{\lambda} \int_{0}^{\infty} \frac{1}{1+\lambda(1-2B) + t} Q \frac{1}{1+\lambda(1-2B) + t} \text{d}t \right\rbrace \text{d}\mu(\lambda) \right] \nonumber
\end{align}
holds. In particular, the operator on the right hand side of Eq.~\eqref{eq:60} is trace-class. Since $\varphi'(B)(A-B)$ is trace-class by assumption we know that $B$ cannot have eigenvalues at points of discontinuity of $\varphi'$ with $(A-B) \neq 0$ on the corresponding eigenspaces. From this we conclude with the help of Lemma~\ref{lem:3} that $\frac{\text{d}}{\text{d} \alpha} \varphi \left( \alpha A + (1-\alpha)B \right) |_{\alpha = 0}$ can be defined as a semibounded quadratic form on $D$. Also on $D$, it is the associated quadratic form of the operator under the trace on the right hand side of Eq.~\eqref{eq:60}, see again Lemma~3. This operator is bounded and hence we can extend $\frac{\text{d}}{\text{d} \alpha} \varphi \left( \alpha A + (1-\alpha)B \right) |_{\alpha = 0}$ to a bounded and symmetric quadratic form on all of $h$ whose associated self-adjoint operator is the operator under the trace on the right hand side of Eq.~\eqref{eq:60}. Hence, we have shown that
\begin{equation}
\tr \left[ \varphi'(B)(A-B) \right] = \tr \left[ \frac{\text{d}}{\text{d} \alpha} \varphi \left( \alpha A + (a-\alpha)B \right) \big|_{\alpha = 0} \right].
\end{equation}
This concludes the proof of Theorem~\ref{thm:4}.

\textbf{Acknowledgements.} The authors are grateful to Mathieu Lewin and Julien Sabin for useful discussions. A.D. would like to thank the IST Austria, where part of this work has been done, for its warm hospitality. Partial financial support by the DFG through the Graduiertenkolleg 1838 is gratefully acknowledged.
\newpage


\newpage

(Andreas Deuchert) Mathematisches Institut, Universit{\"a}t T{\"u}bingen \\ Auf der Morgenstelle 10, 72076 T\"ubingen, Germany\\ E-mail address: \texttt{andreas.deuchert@uni-tuebingen.de} \\ \newline
(Christian Hainzl) Mathematisches Institut, Universit{\"a}t T{\"u}bingen \\ Auf der Morgenstelle 10, 72076 T\"ubingen, Germany\\ E-mail address: \texttt{christian.hainzl@uni-tuebingen.de} \\ \newline
(Robert Seiringer) Institute of Science and Technology Austria (IST Austria)\\ Am Campus 1, 3400 Klosterneuburg, Austria\\ E-mail address: \texttt{robert.seiringer@ist.ac.at}

\end{document}